\newcommand{\bx}{\boldsymbol{x}}
\newcommand{\by}{\boldsymbol{y}}
\newcommand{\ba}{\boldsymbol{a}}
\newcommand{\bc}{\boldsymbol{c}}
\newcommand{\bp}{\boldsymbol{p}}
\newcommand{\R}{\mathds{R}}
\newcommand{\bR}{\boldsymbol{R}}
\newcommand{\bY}{\boldsymbol{Y}}
\newcommand{\bh}{\boldsymbol{h}}
\newcommand{\bu}{\boldsymbol{u}}
\newcommand{\bv}{\boldsymbol{v}}
\newcommand{\bU}{\boldsymbol{U}}
\newcommand{\bV}{\boldsymbol{V}}
\newcommand{\bZ}{\boldsymbol{Z}}
\newcommand{\bL}{\boldsymbol{L}}
\newcommand{\bC}{\boldsymbol{C}}
\begin{document}

\title*{Coupled oscillator networks for von Neumann and non von Neumann computing}
\author{Michele Bonnin, Fabio Lorenzo Traversa and Fabrizio Bonani}
\institute{Michele Bonnin \at Department of Electronics and Telecommunication, Politecnico di Torino, Corso Duca degli Abruzzi 24, 10129 Turin, Italy, \email{michele.bonnin@polito.it}
\and Fabio Lorenzo Traversa \at MemComputing Inc, 9909 Huennekens Street, Suite 110, 92121 San Diego, CA, United States \email{ftraversa@memcpu.com} \and Fabrizio Bonani \at Department of Electronics and Telecommunication, Politecnico di Torino, Corso Duca degli Abruzzi 24, 10129 Turin, Italy, \email{fabrizio.bonani@polito.it}}
%
%
\maketitle

\abstract{
The frenetic growth of the need for computation performance and efficiency, along with the intrinsic limitations of the current main solutions, is pushing the scientific community towards unconventional, and sometimes even exotic, alternatives to the standard computing architectures. In this work we provide a panorama of the most relevant alternatives, both according and not the von Neumann architecture, highlighting which of the classical challenges, such as energy efficiency and/or computational complexity, they are trying to tackle. We focus on the alternatives based on networks of weakly coupled oscillators. This unconventional approach, already introduced by Goto and Von Neumann in the 50s, is recently regaining interest with potential applications to both von Neumann and non von Neumann type of computing. In this contribution, we present a general framework based on the phase equation we derive from the  description of nonlinear weakly coupled oscillators especially useful for computing applications. We then use this formalism to design and prove the working principle and stability assessment of Boolean gates such as NOT and MAJORITY, that can be potentially employed as building blocks for both von Neumann and non-von Neumann architectures.      
}

\section{Introduction}

Electronic architectures designed to perform specialized or general computational tasks date back to the end of the first half of the 20th century, when the University of Pennsylvania disclosed ENIAC \cite{ENIACPatant,Goldstine1946ENIAC}. However, it is only few months after the ENIAC presentation, during the session of lectures at UPenn titled ``The Theory and Techniques for Design of Digital Computers'', that von Neumann wrote the {\it First Draft of a Report on the EDVAC} \cite{VNA} starting what we can dub the von Neumann architecture age.

For almost a century, the von Neumann architecture has been the standard reference for the design of electronic computing machines, especially for the general purpose ones. Its most basic description encompasses an input module, a central processing unit, a memory bank and an output module. Sets of instructions (programs) can be written in the memory, and the processing unit accesses the program and processes input data performing a sequence of operations, including reading and writing repeatedly on the memory, before ending the program and returning the output \cite{computer_architecture_book}.

The von Neumann architecture can be also viewed as the ideal hardware design of a Turing machine \cite{computer_architecture_book,computational_complexity_book}, and as such it inherits the related versatility along with the limitations. In fact, there is no known design that can be used to create a better artificial general purpose computing machine and, at the same time, be completely and deterministically controllable and programmable. However, the same design naturally reveals its drawbacks, such as the famous {\it von Neumann bottleneck} \cite{computer_architecture_book,78_Backus}, dictating that the system throughput is limited by the data transfer between  CPU and memory as the majority of the computation energy is used for the data movement rather than for the actual computation  \cite{computer_architecture_book,Hennessy2019,Horowitz2014}. Limitations do not come only from physics, but also from the computational scaling capability of these machines. For example, the von Neumann architectures are sequential in essence (notice that CPUs exploiting vectorization or distributed architectures like GPUs are still sequential in nature because, even if they can process a set of elements in parallel, the set size does not scale with the problem size) and they work with a finite set of symbols to encode information. On the one hand, both these features allow von Neumann architectures to handle virtually any decidable problem size (problems are termed decidable if they can be computed by a Turing machine \cite{computational_complexity_book}). On the other hand, many of these problems become quickly intractable for von Neumann architectures as their size grows. The theory of computational complexity based on Turing machines \cite{computational_complexity_book}, allows us to categorize decidable problems into classes of different polynomial scaling as a function of the problem size. Therefore, there are problems that can become intractable with von Neumann architectures either because the polynomial degree is too large for the typical problem size, or, even worse, if the scaling rule is a ``non-deterministic'' polynomial, as for the NP problem classes \cite{computational_complexity_book}, thus making them sometimes intractable even for very small size. For instance, solving a square system of linear equations in $n$ variables is a problem of complexity $\text{O}(n^3)$, i.e. its solution with a von Neumann architecture needs an order $n^3$ operations. Therefore, if $n=10^6$ we would need to perform an order  $10^{18}$ operations: this would require a time $\text{O}(10^9)$ s using a CPU working at GHz clock, i.e. tens of years! 

This picture highlights some among the most important reasons behind the quest of new solutions for computing devices, architectures and paradigms. The exponential growth of computation demand is everywhere, from industrial to consumer electronics, from cloud to edge, the internet of things, autonomous vehicles and much more. All of them require a day by day performance increase, along with energy efficiency and reliability. The demand is already challenging the current computing systems, the Moore's law era is at the end and projections for the near future clearly show that either new solutions will emerge, or the demand increase will soon become unsustainable  \cite{computer_architecture_book,Hennessy2019,Horowitz2014}. 

Recent years have seen many, novel or rediscovered, alternative solutions proposed by academia or industry giving some hope for satisfying near future computation demands. However, no clear winner is there yet. On the contrary, it looks like many specialized hardware and architectures are emerging, and there will be multiple winners depending on the applications or problems to be tackled. This is not surprising, since it is quite obvious that a price is paid for a machine to be general purpose. 

GPU is a nice example of a specialized computing architecture already introduced in the 1970s. Initially it was designed to accelerate specific graphic tasks by means of a distributed circuit design characterized by a very large memory bandwidth and a specialization in performing the algebraic operations needed for graphics, such as matrix-vector multiplication \cite{Nickolls2010}. In the last decade, GPUs are increasingly used for other type of applications, such as simulation and machine learning: industry and academia are investing a great deal of effort to develop more sophisticated, efficient and powerful GPUs, so that today they represent the most advanced form of high performance computing \cite{Nickolls2010,Singh2014}.       

The hype on artificial intelligence, recently invigorated by the renewed development of deep learning and artificial neural networks \cite{LeCun2015,Goodfellow2016}, is driving industry and academia towards new computing architectures specialized solely for the training or the inference of neural networks \cite{Sze2017,Sze2017a}. For these applications two main approaches exist: digital, such as Google's TPU \cite{Jouppi2017,Jouppi2018}, and analogue, as for instance the memristor crossbar neural networks on chip \cite{Li2018,Wang2019,Lee2020}. 

An example of a revived alternative suggested in the 1980s by Benioff, Feynman and Manin \cite{Benioff1980,Feynman1982,Manin1980} is quantum computing. The idea is to leverage quantum mechanics to create a non-Turing machine that redefines the computational complexity classes for some problems \cite{QI_bible}. For example, in 1994 Shor showed that, using entangled qubits and quantum gates for their manipulation, it is possible to set up a quantum circuit capable of factorizing integer numbers in cubic time with respect to the length (in digits) of the integer number \cite{Shor_0}. Obviously, this was a great theoretical achievement since it proves that a quantum circuit could break RSA encryption \cite{computational_complexity_book,Shor_0}. However, although quantum circuits seem a very appealing alternative, the technological barrier to their development has proved impervious: only in the last decade, few universities and organizations have accepted the challenge and are trying to build prototypes. Nevertheless, it is still in its infancy, and it will probably require decades before commercial-grade quantum computers will be available \cite{Dyakonov2019}.              

It is also worth mentioning the recent effort in neuromorphic computing, where a central role is played by asynchronous digital or analogue circuits trying to mimic brain neural networks \cite{Yu2018,Burr2016}. Noticeable examples are the systems like SpiNNaker developed at the university of Manchester \cite{Furber2014}, or Loihi by Intel Labs \cite{Davies2018}, or memristive systems at IBM \cite{Boybat2018} that implement spiking neural networks \cite{Tavanaei2019}.          

A general path aimed at mitigating the von Neumann bottleneck goes under the name of near-memory computing \cite{Singh2019}, with the limit case of in-memory computing \cite{Ielmini2018}. The idea here is to bring as close as possible memory and processing unit in order to reduce the energy and latency due to the data exchange between them \cite{Singh2019}. For in-memory computing, at least some logic is directly performed in memory \cite{Ielmini2018,DCRAM,Pershin2015,Sebastian2020}.    

A step forward beyond in-memory computing is the emerging technology dubbed memcomputing. It was introduced as a non-Turing paradigm \cite{UMM}, and it represents the most idealized model of a computational memory \cite{UMM,Di_Ventra2018,Pei2019}. It can be realized in analog \cite{traversaNP} or digital form \cite{DMM2}, however its most effective realization is through non-von Neumann architectures called self-organizing gates \cite{DMM2,Traversa2018a,manukian2017inversion}. The working principle of these gates is as follows: they are designed to reach an equilibrium such that the terminal states are consistent with some prescribed relation \cite{DMM2,Traversa2018a}. For example, self-organizing \textit{logic} gates reach an equilibrium when the terminals satisfy a prescribed set of Boolean relations \cite{DMM2}. However they are terminal agnostic, in the sense that they accept superposition of input and output signals at each terminal \cite{DMM2,Traversa2018a}. This allows to assemble self-organizing circuits by interconnecting self-organizing gates \cite{Di_Ventra2018,DMM2,Traversa2018a} with controllable properties of the dynamics \cite{no-chaos,noperiod}. This novel computing architecture has proven to be particularly efficient in solving combinatorial optimization problems \cite{Traversa2018,Sheldon2019,Traversa2019,Sheldon2019a,AcceleratingDL}.        

Almost all the emerging approaches so far mentioned have in common the exploitation of a transistor used as fundamental building block to create gates or other basic computing structures that ultimately reach some static working point. Alternatives to this scheme are for example the \textit{p}-bits introduced by Datta \textit{et al.,} \cite{Camsari2017}, that can be viewed as a stochastic alternative realization of memcomputing through CMOS-assisted nanomagnets \cite{Camsari2017} or magnetic tunnel junctions \cite{Camsari2017a}. A similar concept has been developed at Los Alamos National lab using artificial spin ice systems \cite{Caravelli2020}.         

This quest for alternative, non-conventional computing solutions has also recently revived the use of oscillators as building blocks for both von Neumann and non-von Neumann architectures. They were initially introduced independently by Goto \cite{Goto1954,Goto1959} and von Neumann \cite{vonNeumannPatent,Wigington1959} in the 1950s. The idea in this case is to encode information in the relative phase among different oscillators  and, by means of proper coupling, use them to perform basic operations such as the NOT or MAJORITY functions  \cite{Goto1954,Goto1959,vonNeumannPatent,Wigington1959}. In the 1960s, machines called Parametrons implementing the Goto design were built in Japan \cite{Goto1954,Goto1959}. Parametrons saw some successes, however they were soon eclipsed by digital computers exploiting the first generation of microscopic electronic devices. Today, the interest is renewed not only because there are many other modern and more compact ways to integrate oscillators, ranging from ring oscillators \cite{Farzeen2010} to spin-torque \cite{Houssameddine2007} and laser-based \cite{Kobayashi1980} structures and beyond \cite{Feng2008,Elowitz2000,Hannay2018,Matsuoka2011,13_amoeba,winfree1967}, but also because coupled oscillators potentially represent a very low power computing system when employed in von Neumann architectures \cite{roychowdhury2015,Raychowdhury2019,Bonnin2018}.   

However, interest in coupled oscillators is significant not only for von Neumann architectures, as recently they have been listed among the most promising building blocks for non-von Neumann architectures such as Ising machines \cite{Csaba2018,Wang2017,Chou2019} and bio-inspired computing architectures \cite{Mallick2020,Csaba2020,Chen2002,Shukla2014}.

In this work we focus on the general mathematical framework that is used to describe the dynamics of networks of oscillators employed to perform computation in the von Neumann as well as non von Neumann sense. We introduce general concepts such as self-sustained oscillations, limit cycles and stability to describe a single autonomous system, and then we extend these to (weakly) coupled oscillatory systems in Section~\ref{Fundamental theory} where we also introduce examples of network configurations, e.g. master-slave, useful for certain types of computations. In Section~\ref{phase equation} we introduce the phase equation through the concept of the isochron and we show how to write a general phase-amplitude equation system that describes the dynamics of a network of weakly coupled oscillators. In Section~\ref{boolean logic} we discuss how to design a simple network of oscillators to perform Boolean logic by means of a register and a complete base formed of the NOT and the MAJORITY gates. Moreover, we also show how to obtain stable OR and AND gates from the MAJORITY operation. For each case, we discuss the working principle showing the asymptotic stability properties of the phase equations describing the network. We finally draw our conclusions in Section~\ref{conclusions}.

\section{Basic unit, network architecture and computational principle}\label{Fundamental theory}

Oscillators are electronic circuits able to produce a continuous, repeated, alternating waveform without any input. Oscillators periodically convert energy between two different forms, and the energy dissipated during each oscillation cycle can, and should, be significantly less than the total energy involved in the oscillation. For example, an $LC$ oscillator characterized by a quality factor $Q$ loses approximately a fraction equal to $1/Q$ of its energy in each oscillation cycle. 

Every electronic oscillator requires at least two energy-storage elements and at least one \textit{nonlinear element.} Figure~\ref{figure1}~(a) shows the basic computing unit that we shall consider.  Since both inductor and capacitor are linear and passive (i.e., $L > 0$ and $C>0$), the resistive one-port $N_R$ must be active, i.e., its characteristic should be contained at least partially in the second and/or fourth quadrant of the $(v,i)$ plane for oscillation to be possible \cite{chua1987}. Besides, the characteristic curve linking $v$ and $i$ is required to be nonlinear. An example is shown in Figure~\ref{figure1}~(b): As the current through the resistor increases, the forward voltage increases rapidly until it reaches a maximum. As the current is further increased, the voltage starts decreasing, until it reaches a minimum, and then it starts increasing again. The region between the peak and the minimum is characterized by a negative differential resistance, and thus the resulting circuit is dubbed \textit{negative resistance oscillator. }

\begin{figure}
	\centering
	\includegraphics[width=80mm]{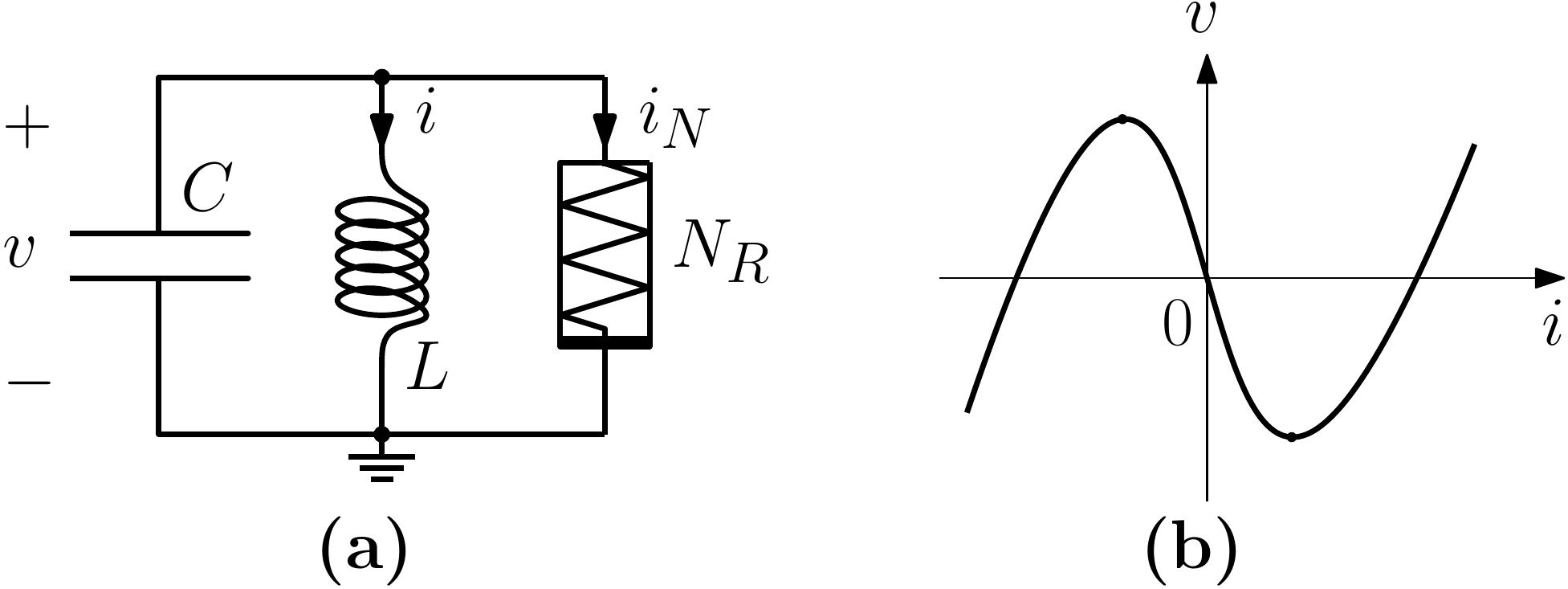}
	\caption{\textbf{(a)} Basic unit: $LC$ oscillator along with a nonlinear resistor. \textbf{(b)} Nonlinear $v(i)$ characteristic with a negative differential resistance region for $N_R$.}\label{figure1}
\end{figure}

The circuit exhibits {\em self-sustained oscillations} if the energy dissipated during the time in which it operates in the positive differential resistance regions, equals the energy supplied by $N_R$ when it operates in the negative differential resistance region. Self-sustained oscillations correspond to a \textit{limit cycle,} i.e. an isolated periodic orbit in the oscillator state space. Proving the existence of limit cycles is, in general, a very difficult if not impossible task. For the circuit in Figure~\ref{figure1}, it is however possible to find conditions for the existence of limit cycles. Applying Kirchhoff current law to one of the two nodes we obtain
\begin{equation}
	C \dfrac{\text{d} v}{\text{d} t}+ i_N + i = 0 \label{sec0-eq1}
\end{equation}
Deriving again with respect to time and using the nonlinear resistor characteristic $i_N = g(v)$,  
\begin{equation}
	\dfrac{\text{d}^2 v}{\text{d} t^2} + \dfrac{1}{C} \dfrac{\text{d} g(v)}{\text{d} v} \dfrac{\text{d} v}{\text{d}t} + \dfrac{1}{LC} v = 0 \label{sec0-eq2}
\end{equation}
Assuming a cubic nonlinearity $g(v) = -g_1 \, v + g_3 \, v^3$, where $g_1$ and $g_3$ are positive parameters, \eqref{sec0-eq2} is a {\em Li\'enard equation}, which is known to admit a stable limit cycle surrounding the origin \cite{perko2013}.

For many practical purposes, it is more convenient to rewrite the second order, ordinary differential equation \eqref{sec0-eq2}, as a system of two, first order differential equations
\begin{subequations}
	\begin{align}
		\dfrac{\text{d}i}{\text{d}t} & = \dfrac{1}{L} v \\[2ex]
		\dfrac{\text{d}v}{\text{d}t} & = - \dfrac{1}{C} i + \dfrac{1}{C} g_1 \, v - \dfrac{1}{C} g_3 v^3
	\end{align} \label{sec0-eq3}
\end{subequations}
Introducing the scaled variables 
\begin{equation}
	x = i, \qquad y = \sqrt{\dfrac{C}{L}} v \qquad \tau = \dfrac{1}{\sqrt{LC}} t \label{sec0-eq4}
\end{equation}
equation \eqref{sec0-eq3} becomes
\begin{subequations}
	\begin{align}
		\dfrac{\text{d}x}{\text{d}\tau} & = y \\[2ex]
		\dfrac{\text{d}y}{\text{d}\tau} & = -x + G(y) =  -x + G_1 y - G_3 y^3 
	\end{align} \label{sec0-eq5}
\end{subequations}
where $G_1 = g_1 \sqrt{L/C}$ and $G_3 = g_3 (L/C)^{3/2}$. 

For a wide range of values of parameters $G_1$ and $G_3$, the dynamical system \eqref{sec0-eq5} not only has a stable limit cycle, but it is also \textit{structurally stable,} meaning that the qualitative behavior of the trajectories is unaffected by small perturbations of the vector field. In particular the amplitude (but not the period) of the limit cycle is robust to external perturbations, showing small variations when perturbation terms are added to the vector field. This is a key requirement for computation purposes. 

To implement computation, oscillators must exchange information, i.e. they must be coupled to form a network. Connections between units must be weak, meaning that they should exchange small energy amounts, to avoid possible bifurcations of limit cycles. One basic assumption is that coupling is so weak that it affects only the phases of oscillators, without modifying significantly amplitude and frequency. A second major assumption to be made concerns how the oscillators are connected, that is, the network topology. The architecture plays a fundamental role in determining the network dynamics, and investigating the role played by interaction strength and topology is a rich and challenging research area (see \cite{albert2002,boccaletti2006} and references therein). On the one hand, the possibility to choose coupling schemes, interaction range and strength, gives the designer a wide range of possibilities to devise novel computational schemes. On the other hand, physical constraints must be taken into account. In a highly interconnected structure wirings may become burdensome, because the number of interconnects outweighs that of the neurons. Moreover, the mathematical  modelling and analysis complexity grows exponentially with the number of neurons and connections, thus making the problem quickly intractable.   

We shall consider networks with both local and global connections. In particular, the network has a central unit that will be used as a {\em reference}, connected to all other {\em neurons} as depicted in Figure~\ref{figure2}. A crucial point is that connections between the reference oscillator and neurons are one-directional, that is, the reference unit influences all other nodes, but not vice-versa. This kind of connection is also referred to as {\em master-slave}. The resulting graph is {\em directed} and {\em weighted}. Neurons have local connections with their first neighbors only. Such an architecture is inspired to biological neural structures, e.g. the thalamus-cortex system, and has been previously suggested as a promising topology for unconventional computing solutions \cite{hoppensteadt1999,itoh2004}.

\begin{figure}
	\centering
	\includegraphics[width=60mm]{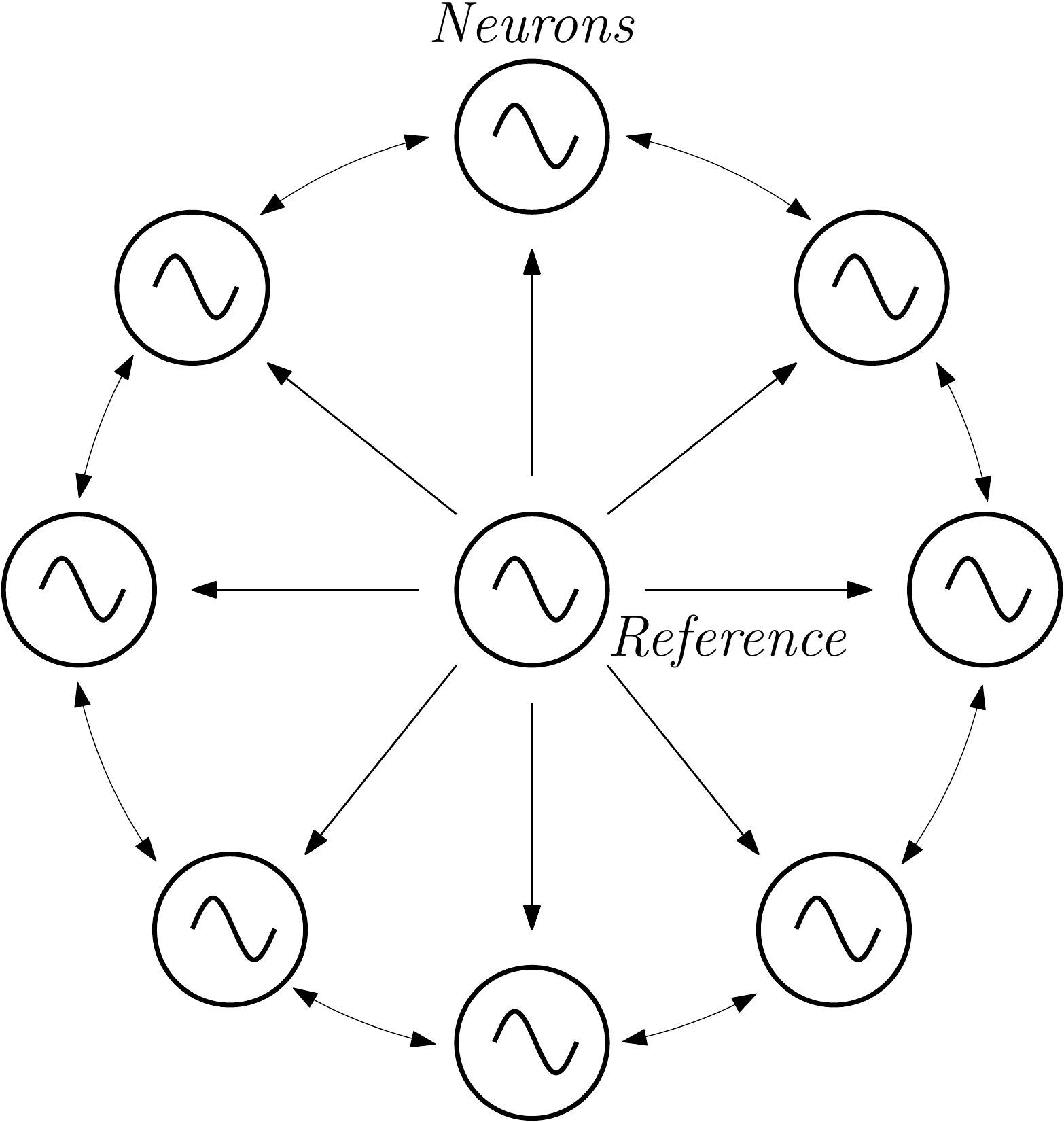}
	\caption{Network architecture.}\label{figure2}
\end{figure}

Information is encoded into the phase of each oscillator. Because the phase is a continuous variable, oscillatory networks in general operate as analog, or in modern terminology {\em bio-inspired}, computers. However, by a proper engineering of the couplings, it is possible to design oscillatory networks where only discrete values of the phase differences are attainable, i.e. those corresponding to asymptotically stable states, obtaining oscillatory networks that operate as digital computers based on Boolean logic.  

Information is processed modifying the phase relationships among  oscillators through the action of couplings. It should be noted that the simple presence of a physical connection between two oscillators does not guarantee by itself that they influence each other. Interaction occurs if the coupling strength exceeds a threshold that depends on the frequency mismatch. In other words, two oscillators whose free running frequencies are sufficiently apart, may be transparent one to the other, even if they are physically connected. Therefore it is possible, in principle, to control couplings without acting on the physical connections, but rather by simply modifying the free running frequencies of the individual oscillators. 

\section{Nonlinear oscillator networks and phase equation}\label{phase equation}

The most important concept to be defined for implementing computation with coupled oscillators is the phase concept. Defining the phase of a nonlinear oscillator is a nontrivial problem, whose analysis dates back to the seminal works \cite{guckenheimer1975,winfree1967}. Today, there is a large consensus that the most convenient way to define the phase of nonlinear oscillators is based on the concept of \textit{isochron} \cite{bonnin2012,bonnin2013,brown2004,ermentrout1996,kuramoto2003,nakao2016,teramae2009,wedgwood2013,wilson2016,yoshimura2008}.  

Consider a nonlinear oscillator with an asymptotically stable, $T$ periodic limit cycle $\bx_\text{s}$. Take a reference initial condition $\bx_\text{s}(0)$ on the limit cycle, and assign phase zero to this point, i.e. $\phi(\bx_\text{s}(0)) = 0$. The state of the oscillator with initial condition $\bx_\text{s}(0)$ at time $t$ is $\bx_\text{s}(t)$, with phase $\phi(\bx_\text{s}(t)) = 2\pi t/T = \omega \,t$. 

\begin{definition}
	The \emph{isochron} based at $\bx_\text{s}(0)$ is the manifold formed by the set of initial conditions $\bx_i(0)$ in the basin of attraction of the limit cycle, such that the trajectories leaving from $\bx_i(0)$ are attracted to $\bx_\text{s}(t)$, that is
	\begin{equation}
		I_{\bx_\text{s}(0)} = \left\{ \bx_i(0) \in \R^n/\bx_\text{s} \; \big| \; \lim_{t\rightarrow +\infty} ||\bx_i(t) - \bx_\text{s}(t) || = 0 \right\}.
	\end{equation}
\end{definition}
Isochrons represent the stable manifold foliation of $\bx_\text{s}(t)$ (see Figure~\ref{figure8}). The phase of the points belonging to the basin of attraction of the stable limit cycle is defined assigning the same phase to points lying on the same isochron. In this way, the phase of the point $\bx_i(t)$ results to be $\phi(\bx_i(t)) = \omega t + \phi(\bx_i(0))$. Thus the isochrons are the level sets of the scalar field $\phi(\bx)$.

\begin{figure}[tb]
	\begin{center}
		\includegraphics[width=40mm]{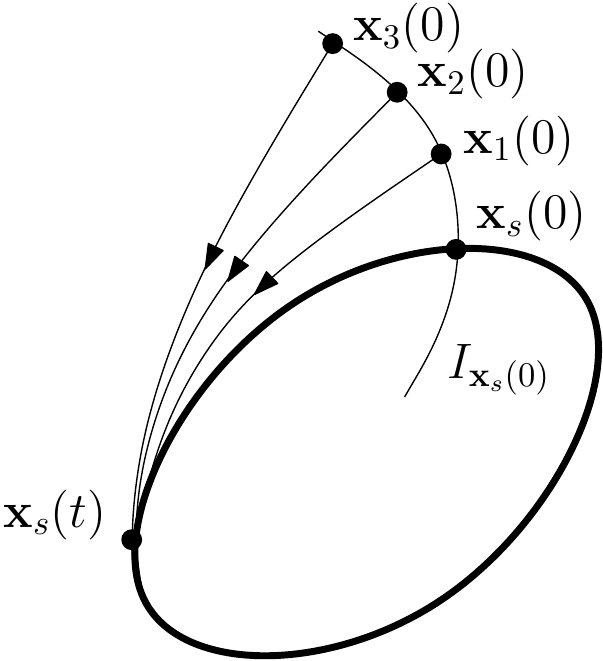}
		\caption{Definition of an isochron.}\label{figure8}
	\end{center}
\end{figure}

\begin{corollary}
	Isochrons are the invariant set for the Poincar\'e first return map $\Phi: I_{\bx_\text{s}(t)} \mapsto I_{\bx_\text{s}(t+T)}$. Different leaves are permuted by the flow, that is	
	\[ \bx_i(0) \in I_{\bx_\text{s}(0)} \Leftrightarrow  \bx_i(t) \in I_{\bx_\text{s}(t)} \]
\end{corollary}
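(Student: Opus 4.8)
\noindent\emph{Proof plan.} The plan is to deduce the corollary directly from the Definition together with two elementary facts about the solutions of the oscillator equation: the semigroup property $\bp(t+s)=\bigl(\bp(t)\bigr)(s)$, where $\bp(\cdot)$ denotes the solution with $\bp(0)=\bp$, and the $T$-periodicity $\bx_\text{s}(t+T)=\bx_\text{s}(t)$ of the limit cycle. The only preliminary is to read the Definition uniformly in the base point: the isochron based at $\bx_\text{s}(t)$ is the set of points $\bp$ in the basin of attraction whose trajectory obeys $\lim_{s\to+\infty}\|\bp(s)-\bx_\text{s}(t+s)\|=0$, i.e.\ trajectories asymptotic to the limit-cycle solution currently sitting at $\bx_\text{s}(t)$; for $t=0$ this is exactly $I_{\bx_\text{s}(0)}$.

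First I would establish the displayed equivalence $\bx_i(0)\in I_{\bx_\text{s}(0)}\Leftrightarrow\bx_i(t)\in I_{\bx_\text{s}(t)}$. For the forward implication (taking $t\ge 0$, so that $\bx_i(t)$ exists and lies in the basin by forward invariance) one uses the semigroup property to identify the trajectory through $\bx_i(t)$ with $s\mapsto\bx_i(t+s)$ and the limit-cycle solution through $\bx_\text{s}(t)$ with $s\mapsto\bx_\text{s}(t+s)$; then $\|\bx_i(t+s)-\bx_\text{s}(t+s)\|$ is merely the original distance evaluated along the shifted variable $t+s$, which still tends to $0$ as $s\to+\infty$, so $\bx_i(t)\in I_{\bx_\text{s}(t)}$. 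The converse is the same computation read the other way: from $\bx_i(t)\in I_{\bx_\text{s}(t)}$ one has $\|\bx_i(t+s)-\bx_\text{s}(t+s)\|\to 0$, and the substitution $u=t+s$ returns $\|\bx_i(u)-\bx_\text{s}(u)\|\to 0$, i.e.\ $\bx_i(0)\in I_{\bx_\text{s}(0)}$. This says precisely that the time-$t$ map carries the leaf $I_{\bx_\text{s}(0)}$ bijectively onto $I_{\bx_\text{s}(t)}$, so the flow permutes the leaves; and since two trajectories starting on a common leaf stay on a common leaf for all time, the leaves are exactly the equivalence classes of this relation.

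The statement about the Poincar\'e first-return map then follows by specialising to $t=T$: periodicity gives $\bx_\text{s}(T)=\bx_\text{s}(0)$, hence $I_{\bx_\text{s}(T)}=I_{\bx_\text{s}(0)}$, and the equivalence reads $\bx_i(0)\in I_{\bx_\text{s}(0)}\Leftrightarrow\bx_i(T)\in I_{\bx_\text{s}(0)}$ --- i.e.\ the time-$T$ map sends every isochron onto itself. Identifying $\Phi$ with the restriction of this time-$T$ map to a leaf gives $\Phi:I_{\bx_\text{s}(t)}\to I_{\bx_\text{s}(t+T)}=I_{\bx_\text{s}(t)}$, so each isochron is an invariant set of $\Phi$.

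I do not expect a genuine obstacle; the content is entirely the bookkeeping above. The only points that deserve a line of care are fixing the convention in the first paragraph --- that the isochron based at $\bx_\text{s}(t)$ is defined by attraction to the \emph{shifted} limit-cycle solution $s\mapsto\bx_\text{s}(t+s)$, since the Definition as written spells out only the base point $\bx_\text{s}(0)$ --- and noting that the basin of attraction is invariant under the flow, so that $\bx_i(t)$ remains an admissible point. Both are immediate, since convergence to the limit cycle is a property of a trajectory's tail and is unaffected by translating the time origin.
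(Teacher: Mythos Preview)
Your proof is correct and follows the same approach as the paper, which simply states that the corollary ``follows directly from the definition of isochrons and the definition of the phase function.'' You have carefully spelled out the bookkeeping (semigroup property of the flow, $T$-periodicity, invariance of the basin) that the paper leaves implicit, but the underlying argument is identical.
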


\begin{proof}
	It follows directly from the definition of isochrons and the definition of the phase function.
\end{proof}

Isochrons and the phase function proved valuable tools to understand the physics of weakly perturbed oscillators, and the mechanisms leading to synchronization of coupled oscillators, but unfortunately their application to make quantitative predictions about real world oscillators is hindered by the fact that they can rarely be found analytically. Thus, one has to resort to rather sophisticated algorithms for their calculation \cite{guillamon2009,huguet2013,izhikevich2007,mauroy2012,mauroy2013,wilson2019,wilson2019b}.

An alternative solution is to decompose the oscillator's dynamics into two components, one tangent and one transversal (but not necessarily orthogonal) to the limit cycle \cite{wedgwood2013,aronson1996,bonnin2016,bonnin2017a,bonnin2017b,bonnin2019,demir2000,kaertner1990,traversa2015,wilson2020,wilson2020b}. The dynamics of the projection along the cycle represents the phase dynamics, whereas the transversal components are interpreted as the amplitude deviation. Obviously, both the phase and amplitude deviation variables depend upon the projection used, however it can be shown that, if the projection operators are chosen according to Floquet theory in the neighborhood of the limit cycle, the phase variable coincides with the phase defined using isochrons \cite{bonnin2017a,bonnin2017b}.

Consider a network composed by $N$ weakly coupled dynamical systems. The normalized state equations read
\begin{equation}
	\dfrac{\text{d} \bx_i}{\text{d} \tau} = \ba_i(\bx_i,\boldsymbol \mu_i) + \varepsilon \bc_i(\bx_1,\ldots,\bx_N) \qquad \textrm{for} \;\; i=1,\ldots,N \label{sec1-eq1}
\end{equation}
where $\bx_i: \R^+ \mapsto \R^n$ and $\ba_i: \R^n \mapsto \R^n$ are vector valued functions describing the state and the internal dynamics of the oscillators, respectively, $\boldsymbol \mu_i \in \R^m$ are parameter vectors, $\varepsilon \ll 1$ is a small parameter that defines the interaction strength, and $\bc_i: \R^{n\times N} \mapsto \R^n$ are vector valued functions describing the interactions. We shall assume that all functions are sufficiently smooth. 

Natural and even man made systems can never be perfectly identical. This can be accounted for assuming that the parameters $\boldsymbol \mu_i$ are subject to a  small deviation from an average value 
$\boldsymbol \mu_i = \boldsymbol \mu_0 + \delta \boldsymbol \mu_i$. Expanding the vector field around $\boldsymbol \mu_0$
\begin{equation}
	\ba_i(\bx_i,\boldsymbol \mu_i) = \ba_i(\bx_i, \boldsymbol \mu_0) + \dfrac{\partial \ba_i(\bx_i,\boldsymbol \mu_0)}{\partial \boldsymbol \mu_i} \delta \boldsymbol \mu_i + \ldots \label{sec1-eq2}
\end{equation}
where $\partial \ba_i(\bx_i,\boldsymbol \mu_0)/\partial \boldsymbol \mu$ is the matrix of partial derivatives of $\ba_i$ with respect to $\boldsymbol \mu$. For the sake of simplicity, we shall assume that the dynamic equations of the uncoupled systems are identical, and that the differences are small enough to approximate  \eqref{sec1-eq2} as
\begin{equation}
	\ba_i(\bx_i, \boldsymbol \mu_i) \approx \ba(\bx_i) + \varepsilon \bp_i(\bx_i) \label{sec1-eq3}
\end{equation}
Thus, the governing equations \eqref{sec1-eq1} can be rewritten as
\begin{equation}
	\dfrac{\text{d}\bx_i}{\text{d} \tau} = \ba(\bx_i) + \varepsilon \bp_i(\bx_i) + \varepsilon \bc_i(\bx_1,\ldots,\bx_N) \label{sec1-eq4}
\end{equation}
for $i=1,\ldots,N$.

It is convenient to introduce an ideal dynamical system, considered as reference
\begin{equation}
	\dfrac{\text{d} \bx}{\text{d} \tau} = \ba(\bx) \label{sec1-eq5}
\end{equation} 
It is important to stress that the reference system does not need to really exist, let alone belong to the considered set of oscillators. 
We restrict the attention to the case where the reference unit is an oscillator, i.e. we assume that \eqref{sec1-eq5} has an asymptotically stable $T$--periodic solution $\bx_\text{s}(\tau)$ such that
\begin{equation}
	\left \{ \begin{array}{c}
		\dfrac{\text{d} \bx_\text{s}(\tau)}{\text{d} \tau} = \ba(\bx_\text{s}(\tau)) \\[2ex]
		\bx_\text{s}(\tau) = \bx_\text{s}(\tau+T) \label{sec1-eq6}
	\end{array}\right.
\end{equation}
representing a limit cycle in the state space. The periodic solution is used to define the unit vector tangent to the limit cycle
\begin{equation}
	\bu_{1}(\tau) = \dfrac{\ba(\bx_\text{s}(\tau))}{|\ba(\bx_\text{s}(\tau))|} \label{sec1-eq7}
\end{equation}
where $|\cdot|$ denotes the $L_2$ norm. Together with $\bu_1(\tau)$, we consider other $n-1$ linear independent vectors $\bu_2(\tau),\ldots,\bu_n(\tau)$, such that the set $\{\bu_1(\tau),\ldots,\bu_n(\tau)\}$ is a basis for $\R^n$, for all $\tau$. Given the matrix $\bU(\tau) = [\bu_1(\tau),\ldots,\bu_n(\tau)]$, we define the reciprocal vectors $\bv_1^T(\tau),\ldots,\bv_n^T(\tau)$ to be the rows of the inverse matrix $\bV(\tau) = \bU^{-1}(\tau)$. Thus $\{\bv_1(\tau),\ldots,\bv_n(\tau)\}$ also spans $\R^n$ and the bi--orthogonality condition $\bv_i^T \bu_j = \bu_i^T \bv_j = \delta_{ij}$ for all $\tau$, holds. We shall also use  matrices $\bY(\tau) = [\bu_2(\tau),\ldots,\bu_n(\tau)]$, $\bZ(\tau) = [\bv_2(\tau),\ldots,\bv_n(\tau)]$, and the magnitude of the vector field evaluated on the limit cycle, $r(\tau) = |\ba(\bx_\text{s}(\tau))|$.

The use of Floquet bases for the sets $\{\bu_1(\tau),\ldots,\bu_n(\tau)\}$ and $\{\bv_1(\tau),\ldots,\bv_n(\tau)\}$ turns out to be very convenient, because they realize a complete decomposition of the state space. At each point $\bx \in \bx_s$, the limit cycle's stable manifold can be decomposed into two complementary linear spaces, the space $T_{\bx} M$ tangent to the limit cycle at $\bx$, and the space $T_{\bx} I$ tangent to the isochron at $\bx$. Extending the definition to all points belonging on the limit cycle leads to the tangent bundles $T M$ and $T I$, that are spanned by the vectors $\bu_1(\tau)$ and $\{\bu_2(\tau),\ldots,\bu_n(\tau)\}$, respectively. The covector $\bv_1(\tau)$ is normal to all vectors $\bu_k(\tau)$, for $k=2,\ldots,n$ and in turn, each covector $\bv_k(\tau)$, $k=2,\ldots,n$ is normal to $\bu_1(\tau)$. Thus $\bv_1(\tau)$ spans the one-dimensional cotangent bundle $N I$ orthogonal to $T I$, while the covectors $\{\bv_2(\tau),\ldots,\bv_n(\tau)\}$ span the $(n-1)$-dimensional cotangent bundle $N M$ orthogonal to $T M$ \cite{djurhuus2008,djurhuus2020}. The picture for a second order oscillator is illustrated in figure \ref{figure8bis}.

\begin{figure}[tb]
	\begin{center}
		\includegraphics[width=40mm]{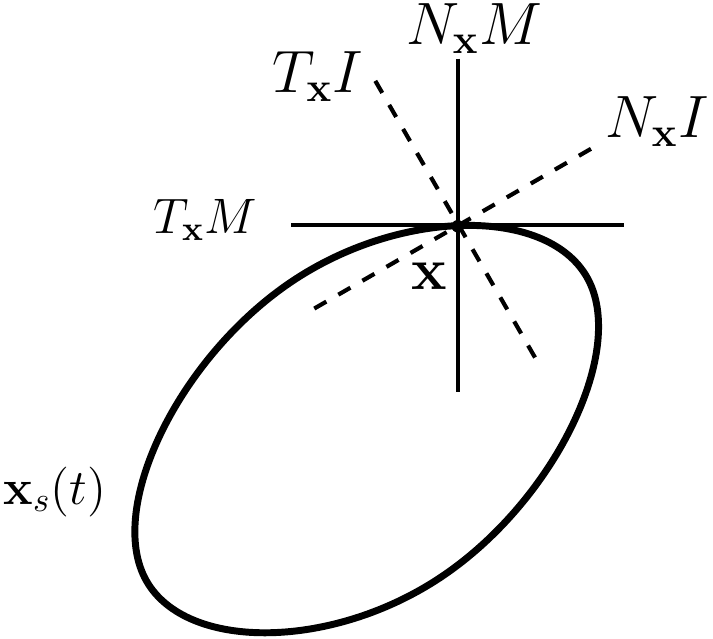}
		\caption{State space decomposition for a second order oscillator, illustrating the tangent spaces $T_{\bx} M$, $T_{\bx} I$, and the normal spaces $N_{\bx} M$, $N_{\bx} I$.}\label{figure8bis}
	\end{center}
\end{figure}

The following theorem establishes the equation governing the time evolution for the phase and amplitude deviation variables. 

\begin{theorem}[Phase-amplitude deviation equations\label{phase-amplitude model}]
	
	Consider system \eqref{sec1-eq4}, admitting of a $T$-periodic limit cycle $\bx_\text{s}(\tau)$ for $\varepsilon = 0$. Let $\{\bu_1(\tau),\ldots,\bu_n(\tau)\}$ and $\{\bv_1(\tau),\ldots,\bv_n(\tau)\}$ be two reciprocal bases such that $\bu_1(\tau)$ satisfies \eqref{sec1-eq7} and such that the bi--orthogonality condition $\bv_i^T \bu_j = \bu_i^T \bv_j = \delta_{ij}$ holds. Consider the coordinate transformation (dependence on $\tau$ in $\theta_i$ and $\bR_i$ is omitted for simplicity of notation)
	\begin{equation}
		\bx_i(\tau) = \bh(\theta_i,\bR_i) = \bx_\text{s}(\theta_i) + \bY(\theta_i) \bR_i \label{sec1-eq8}
	\end{equation}
	Then a neighborhood of the limit cycle $\bx_\text{s}(\tau)$ exists where the phase $\theta_i(\tau)$ satisfies
	\begin{equation}
		\dfrac{\text{d} \theta_i}{\text{d} \tau} = 1 +  a_{\theta}(\theta_i,\bR_i) + \varepsilon \, \delta a_{\theta_i}(\theta_i,\bR_i) + \varepsilon C_{\theta_i}(\theta_1,\bR_1,\ldots,\theta_N,\bR_N) \label{sec1-eq9} 
	\end{equation}
	with 
	\begin{subequations}
		\begin{align}
			&K(\theta_i,\bR_i)  = \left( |\ba(\bx_\text{s}(\theta_i))|  + \bv_1^T(\theta_i) \dfrac{\partial \bY(\theta_i)}{\partial \theta_i} \bR_i \right)^{-1}\label{sec1-eq10} \\[1ex]
			&a_{\theta}(\theta_i,\bR_i)  = K(\theta_i,\bR_i) \, \bv_1^T(\theta_i)  \, \bigg(\ba(\bx_\text{s}(\theta_i)+\bY(\theta_i) \bR_i) - \ba(\bx_\text{s}(\theta_i)) -\dfrac{\partial \bY(\theta_i)}{\partial \theta_i} \bR_i \bigg) \label{sec1-eq11} \\[1ex]
			&\delta a_{\theta_i}(\theta_i,\bR_i) =   K(\theta_i,\bR_i) \, \bv_1^T(\theta_i) \,  \bp_i\big( \bx_\text{s}(\theta_i) + \bY(\theta_i) \bR_i \big)\label{sec1-eq12} \\[1ex]
			&C_{\theta_i}(\theta_1,\bR_1,\ldots,\theta_N,\bR_N) \nonumber\\[1ex]
			&\qquad =  K(\theta_i,\bR_i) \, \bv_1^T(\theta_i) \,  \bc_i\big( \bx_\text{s}(\theta_1) + \bY(\theta_1) \bR_1,\ldots, \bx_\text{s}(\theta_N) + \bY(\theta_N) \bR_N \big) \label{sec1-eq13}
		\end{align}
	\end{subequations}	
	The amplitude deviations satisfy
	\begin{equation}
		\dfrac{\text{d} \bR_i}{\text{d} \tau} =  \bL(\theta_i) \bR_i +  \ba_{\bR}(\theta_i,\bR_i) + \varepsilon \, \delta \ba_{\bR_i}(\theta_i,\bR_i) + \varepsilon \, \bC_{\bR_i}(\theta_1,\bR_1,\ldots,\theta_N,\bR_N) \label{sec1-eq14} 
	\end{equation}
	where
	\begin{subequations}
		\begin{align}
			&\bL(\theta_i)  = -\bZ^T(\theta_i) \, \dfrac{\partial \bY(\theta_i)}{\partial \theta_i} \label{sec1-eq15} \\[1ex]
			&\ba_{\bR}(\theta_i,\bR_i)  = - \bZ^T(\theta_i) \, \left( \dfrac{\partial \bY(\theta_i)}{\partial \theta_i} \, \bR_i \, a_{\theta} (\theta_i,\bR_i) - \ba\big(\bx_\text{s}(\theta_i) + \bY(\theta_i) \bR_i \big) \right)  \label{sec1-eq16} \\[1ex]
			&\delta \ba_{\bR_i}(\theta_i,\bR_i) =   - \bZ^T(\theta_i) \, \left( \dfrac{\partial \bY(\theta_i)}{\partial \theta_i} \, \bR_i \, \delta a_{\theta} (\theta_i,\bR_i) - \bp_i\big(\bx_\text{s}(\theta_i) + \bY(\theta_i) \bR_i \big) \right) \label{sec1-eq17} \\[1ex]
			&\bC_{\bR_i}(\theta_1,\bR_1,\ldots,\theta_N,\bR_N) \nonumber\\[1ex]
			&\qquad=  \bZ^T(\theta_i) \,  \bc_i\big( \bx_\text{s}(\theta_1) + \bY(\theta_1) \bR_1,\ldots, \bx_\text{s}(\theta_N) + \bY(\theta_N) \bR_N \big) \label{sec1-eq18}
		\end{align}
	\end{subequations}	
\end{theorem}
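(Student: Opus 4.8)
The plan is to treat \eqref{sec1-eq8} as a (time–dependent, moving–frame) change of coordinates $\bx_i\leftrightarrow(\theta_i,\bR_i)$ valid on a tube around the cycle, differentiate it along the flow \eqref{sec1-eq4}, and then extract the equations for $\theta_i$ and $\bR_i$ by projecting the resulting identity onto the reciprocal frame. The single genuinely analytic ingredient is the validity of the coordinate change: the Jacobian of $\bh(\theta_i,\bR_i)$ at $\bR_i=\boldsymbol 0$ has first column $\partial\bx_\text{s}(\theta_i)/\partial\theta_i=\ba(\bx_\text{s}(\theta_i))=r(\theta_i)\,\bu_1(\theta_i)$ and remaining columns those of $\bY(\theta_i)=[\bu_2(\theta_i),\dots,\bu_n(\theta_i)]$, hence it is invertible because $\{\bu_1,\dots,\bu_n\}$ is a basis and $r(\theta_i)>0$; by the inverse function theorem together with compactness of the cycle there is an $\varepsilon$-independent tube $|\bR_i|<\rho$ on which $\bh$ is a diffeomorphism and on which the quantity $K(\theta_i,\bR_i)^{-1}=|\ba(\bx_\text{s}(\theta_i))|+\bv_1^T(\theta_i)\,\partial\bY(\theta_i)/\partial\theta_i\,\bR_i$ of \eqref{sec1-eq10} — which is exactly the $\bv_1^T$–component of that first Jacobian column after the $\bR_i$–correction — stays bounded away from $0$. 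Everything after this is linear algebra.

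Next I would differentiate \eqref{sec1-eq8} with respect to $\tau$, using $\text{d}\bx_\text{s}(\theta_i)/\text{d}\tau=\ba(\bx_\text{s}(\theta_i))\,\text{d}\theta_i/\text{d}\tau$ and the chain rule on $\bY(\theta_i)\bR_i$, and substitute \eqref{sec1-eq4} on the right, getting (writing $\bh$ for $\bh(\theta_i,\bR_i)$ and $\bh_j$ for $\bh(\theta_j,\bR_j)$)
\[
\Bigl(\ba(\bx_\text{s}(\theta_i))+\dfrac{\partial\bY(\theta_i)}{\partial\theta_i}\bR_i\Bigr)\dfrac{\text{d}\theta_i}{\text{d}\tau}+\bY(\theta_i)\dfrac{\text{d}\bR_i}{\text{d}\tau}=\ba(\bh)+\varepsilon\,\bp_i(\bh)+\varepsilon\,\bc_i(\bh_1,\dots,\bh_N).
\]
This is an $n$–dimensional linear system for the scalar $\text{d}\theta_i/\text{d}\tau$ and the $(n-1)$–vector $\text{d}\bR_i/\text{d}\tau$, and the reciprocal frame uncouples it. Projecting onto $\bv_1^T(\theta_i)$ and using $\bv_1^T\ba(\bx_\text{s})=|\ba(\bx_\text{s})|$ and $\bv_1^T\bY=\boldsymbol 0$ eliminates the $\text{d}\bR_i/\text{d}\tau$ contribution and gives $\text{d}\theta_i/\text{d}\tau=K(\theta_i,\bR_i)\,\bv_1^T(\theta_i)\,[\text{RHS}]$; projecting instead onto the rows of $\bZ^T(\theta_i)$ and using $\bZ^T\ba(\bx_\text{s})=\boldsymbol 0$ and $\bZ^T\bY=I_{n-1}$ gives $\text{d}\bR_i/\text{d}\tau=\bZ^T(\theta_i)\,[\text{RHS}]+\bL(\theta_i)\bR_i\,\text{d}\theta_i/\text{d}\tau$, with $\bL(\theta_i)=-\bZ^T(\theta_i)\,\partial\bY(\theta_i)/\partial\theta_i$ as in \eqref{sec1-eq15}.

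The last step is bookkeeping. In $\bv_1^T[\text{RHS}]$ I add and subtract both $\ba(\bx_\text{s}(\theta_i))$ and $\partial\bY(\theta_i)/\partial\theta_i\,\bR_i$; the identity $K(\theta_i,\bR_i)\bigl(|\ba(\bx_\text{s}(\theta_i))|+\bv_1^T\,\partial\bY/\partial\theta_i\,\bR_i\bigr)=1$ turns the two subtracted pieces into the constant $1$ of \eqref{sec1-eq9}, what is left of the $\ba$–term is precisely $a_\theta$ of \eqref{sec1-eq11}, and the $\varepsilon\bp_i$, $\varepsilon\bc_i$ pieces are $\varepsilon\,\delta a_{\theta_i}$ and $\varepsilon\,C_{\theta_i}$ of \eqref{sec1-eq12}–\eqref{sec1-eq13}. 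For the amplitude, I insert $\text{d}\theta_i/\text{d}\tau=1+a_\theta+\varepsilon\,\delta a_{\theta_i}+\varepsilon C_{\theta_i}$ into $\bL(\theta_i)\bR_i\,\text{d}\theta_i/\text{d}\tau$, regroup with $\bZ^T[\text{RHS}]$, and collect by powers of $\varepsilon$: the $\varepsilon^0$ part is $\bL(\theta_i)\bR_i+\ba_\bR$ with $\ba_\bR=-\bZ^T(\partial\bY/\partial\theta_i\,\bR_i\,a_\theta-\ba(\bh))$ as in \eqref{sec1-eq16}, and the $\varepsilon^1$ part is $\varepsilon\,\delta\ba_{\bR_i}$ and $\varepsilon\,\bC_{\bR_i}$ of \eqref{sec1-eq17}–\eqref{sec1-eq18}. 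As a consistency check, $a_\theta(\theta_i,\boldsymbol 0)=0$ and $\ba_\bR(\theta_i,\boldsymbol 0)=\boldsymbol 0$ (the latter because $\bZ^T\ba(\bx_\text{s})=\boldsymbol 0$), so on the cycle $\theta_i$ advances at unit rate and $\bR_i$ stays zero, as it must.

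The hard part is really only the first paragraph: one must make sure the moving frame is nondegenerate \emph{uniformly} along the whole cycle, so that the tube and the factor $K$ are well defined on an $\varepsilon$-independent neighbourhood — exactly the region in which the theorem is stated. The rest is routine but must be organized carefully, the main pitfall being to perform the add-and-subtract and the $\varepsilon$-collection so that the leading $1$ in \eqref{sec1-eq9} and the operator $\bL$ in \eqref{sec1-eq14} come out explicitly rather than remaining hidden inside $a_\theta$ and $\ba_\bR$. \qed
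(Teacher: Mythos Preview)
Your proposal is correct and follows essentially the same approach as the paper: show invertibility of $\bh$ on the cycle via the Jacobian $[r(\theta_i)\bu_1,\bu_2,\dots,\bu_n]$ and the inverse function theorem, then differentiate \eqref{sec1-eq8}, equate to \eqref{sec1-eq4}, and project onto $\bv_1^T$ and $\bZ^T$ using bi-orthogonality. Your write-up is in fact more explicit than the paper's (which simply says ``multiplying to the left by $\bv_1^T$'' and ``by $\bZ^T$'' yields \eqref{sec1-eq9} and \eqref{sec1-eq14}), and your additional remarks on compactness for uniformity of the tube and the consistency checks $a_\theta(\theta_i,\boldsymbol 0)=0$, $\ba_\bR(\theta_i,\boldsymbol 0)=\boldsymbol 0$ are welcome refinements.
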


\begin{proof}
	
	First, we show that the coordinate transformation \eqref{sec1-eq8} is invertible in the neighborhood of the limit cycle, and thus the problem is well posed. The Jacobian matrix of the coordinate transformation is
	\begin{equation}
		D\bh(\theta_i,\bR_i) = \left[\dfrac{\partial \bh}{\partial \theta_i}  \quad \dfrac{\partial \bh}{\partial \bR_i} \right] = \left[ \dfrac{\partial \bx_\text{s}(\theta_i)}{\partial \theta_i} + \dfrac{\partial \bY(\theta_i)}{\partial} \bR_i \quad \bY(\theta_i) \right] \label{sec1-eq19}
	\end{equation}
	On the limit cycle $\bR_i=0$ and then
	\[ D\bh(\theta_i,\bR_i)\big|_{\bR_i=0} = \left[\dfrac{\partial \bx_\text{s}(\theta_i)}{\partial \theta_i}  \quad \bY(\theta_i) \right] = \left [|\ba\big(\bx_\text{s}(\theta_i)\big)| \, \bu_1(\theta_i), \bu_2(\theta_i),\ldots, \bu_n(\theta_i)\right] \]
	Since $\{\bu_1(\tau),\ldots,\bu_n(\tau)\}$ is a basis for $\R^n$, it follows that the determinant of the Jacobian matrix is not zero. Then by the inverse function theorem there exists a neighborhood of $\bR_i=0$ where $\bh$ is invertible. Moreover, if $\bh$ is of class $\mathcal{C}^k$ then its inverse is also of class $\mathcal{C}^k$.
	
	Next we derive \eqref{sec1-eq9} and \eqref{sec1-eq14}. Taking the derivative of \eqref{sec1-eq8} we have
	\begin{align}
		\nonumber \dfrac{\text{d} \bx_i}{\text{d} \tau} & = \dfrac{\partial \bh(\theta_i,\bR_i)}{\partial \theta_i} \dfrac{\text{d}\theta_i}{\text{d} \tau} + \dfrac{\partial \bh(\theta_i,\bR_i)}{\partial \bR_i} \dfrac{\text{d}  \bR_i}{\text{d} \tau} \\[1ex]
		&\nonumber = \left( \dfrac{\partial \bx_\text{s}(\theta_i)}{\partial \theta_i} + \dfrac{\partial \bY(\theta_i)}{\partial \theta_i} \bR_i \right) \dfrac{\text{d}\theta_i}{\text{d} \tau} + \bY(\theta_i) \dfrac{\text{d} \bR_i}{\text{d} \tau} = \\[1ex] 
		&\nonumber = \ba\Big(\bx_\text{s}(\theta_i) + \bY(\theta_i) \bR_i \Big) + \varepsilon \bp_i\Big(\bx_\text{s}(\theta_i) + \bY(\theta_i) \bR_i \Big)  \\[1ex]
		&\qquad + \varepsilon \bc_i \Big( \bx_\text{s}(\theta_1)+ \bY(\theta_1) \bR_1, \ldots, \bx_\text{s}(\theta_N) + \bY(\theta_N) \bR_N \Big)  \label{sec1-eq20}
	\end{align}
	Multiplying to the left by $\bv_1(\theta_i)^T$, using the bi-orthogonality condition and rearranging terms we obtain \eqref{sec1-eq9}. Conversely, multiplying to the left by $\bZ^T(\theta_i)$ we find \eqref{sec1-eq14}. $\qed$
\end{proof}

Functions $\theta_i: \R^+ \mapsto \R^+$ can be interpreted as an elapsed time from the initial condition, representing a new parametrization of the trajectories. Because they are measured on the limit cycle, they map the interval $[0,T) \rightarrow [0,2\pi)$ and can be viewed as \emph{phase functions}. Functions $\bR_i : \R^+ \mapsto \R^{n-1}$ represent orbital deviations from the limit cycle, and they will be referred to as the \emph{amplitude functions} or \emph{amplitudes}, for simplicity. 

Figure \ref{figure3} shows the idea behind this decomposition. At any time instant the solution of the perturbed ($\varepsilon \ne 0$) system $\bx_i(t)$ can be decomposed into two contributions. The first component is the unperturbed limit cycle $\bx_\text{s}(\theta_i(t))$ evaluated at the unknown time instant $\theta_i(t)$, plus a distance $\bR_i(t)$ measured along the directions spanned by the vectors $\bu_2,\ldots,\bu_n$.

\begin{figure}
	\centering
	\includegraphics[width=50mm]{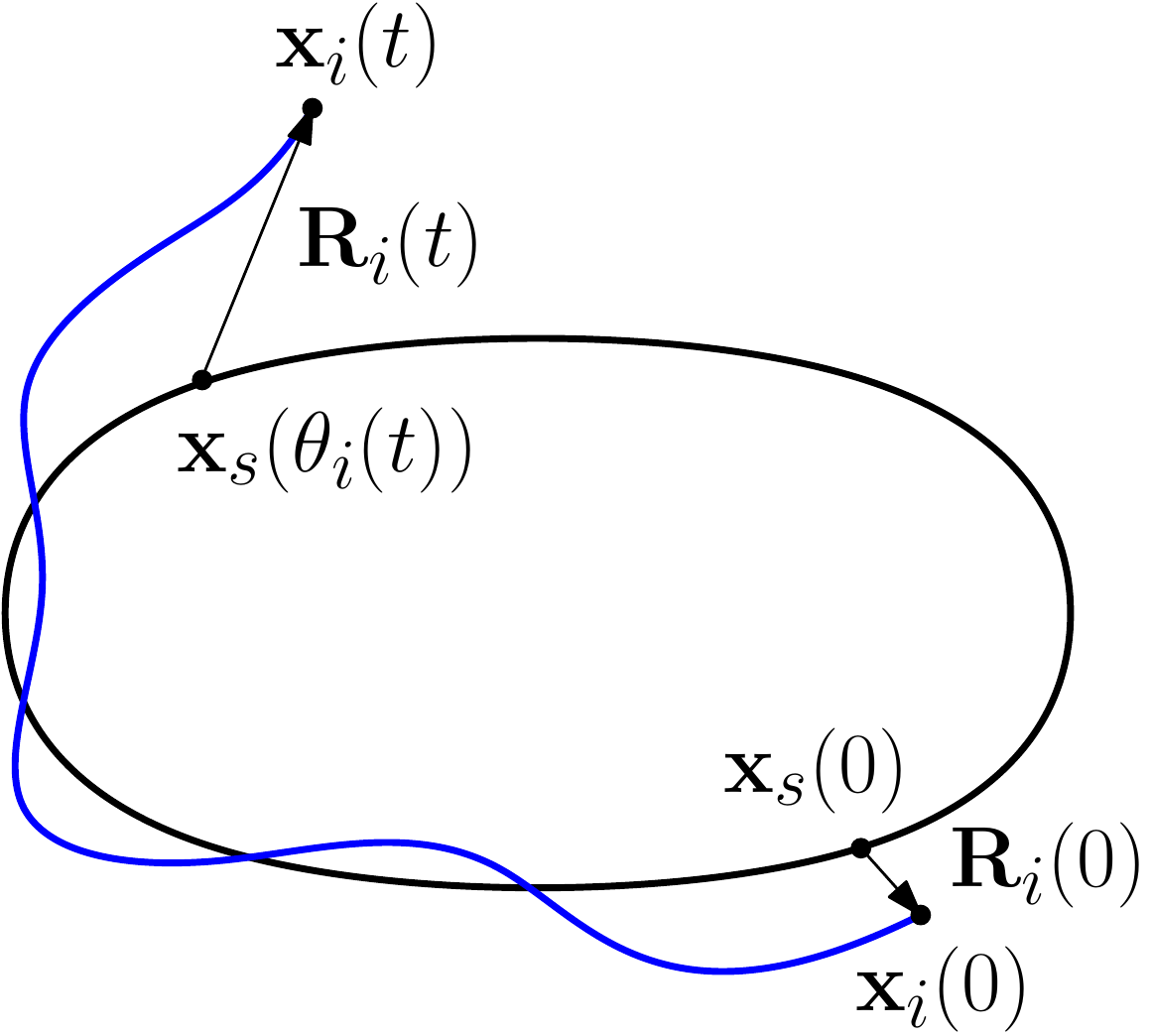}%
	\caption{State space decomposition of the solution $\bx_i(t)$}\label{figure3}
\end{figure}


For most practical applications, the phase is the most important variable. In fact, perturbations along the cycle are neither absorbed nor amplified, since this variable corresponds to the unitary Floquet multiplier and thus to the sole neutrally stable direction. They persist and become eventually unbounded asymptotically with time. Conversely,  perturbations transverse to the cycle are absorbed, being associated to the remaining Floquet multipliers (those with magnitude less than one for a stable limit cycle), as proved in the following corollary.

\begin{corollary}\label{corollary amplitude}
	Consider the reference oscillator \eqref{sec1-eq5} with the $T$-periodic limit cycle $\bx_\text{s}(\tau)$. Let $1,\mu_2,\ldots,\mu_n$ be the characteristic multipliers of the variational equation
	\begin{equation}
		\dfrac{\text{d}\by(\tau)}{d\tau} = \frac{\partial \ba(\bx_\text{s}(\tau))}{\partial \bx} \, \by(\tau) \label{sec1-eq22}
	\end{equation}
	Then $\bR_i=0$ is an equilibrium point for the amplitude equation of the unperturbed system,
	\begin{equation}
		\dfrac{\text{d} \bR_i}{dt} =  \bL(\theta_i) \bR_i + \ba_{\bR}(\theta_i,\bR_i)  \label{sec1-eq23}
	\end{equation}
	with characteristic multipliers $\mu_2,\ldots,\mu_n$.
\end{corollary}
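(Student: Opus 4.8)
The plan is to verify the two assertions separately: first, that $\bR_i=0$ solves the unperturbed amplitude equation \eqref{sec1-eq23}; second, that the characteristic multipliers of this equilibrium coincide with $\mu_2,\ldots,\mu_n$.

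For the first assertion I would evaluate the right-hand side of \eqref{sec1-eq23} at $\bR_i=0$. The term $\bL(\theta_i)\bR_i$ vanishes, and in the expression \eqref{sec1-eq16} for $\ba_{\bR}$ the contribution proportional to $\bR_i$ drops, leaving $\ba_{\bR}(\theta_i,0)=\bZ^T(\theta_i)\,\ba(\bx_\text{s}(\theta_i))$. By \eqref{sec1-eq7} the vector field on the cycle is $\ba(\bx_\text{s}(\theta_i))=r(\theta_i)\,\bu_1(\theta_i)$, and the bi-orthogonality condition gives $\bv_k^T(\theta_i)\bu_1(\theta_i)=0$ for $k=2,\ldots,n$, so $\bZ^T(\theta_i)\,\ba(\bx_\text{s}(\theta_i))=0$ and $\bR_i=0$ is an equilibrium. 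The same substitution in \eqref{sec1-eq11} gives $a_{\theta}(\theta_i,0)=0$, so on the cycle $\theta_i$ obeys $\text{d}\theta_i/\text{d}\tau=1$, consistent with its reading as an elapsed time.

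For the multipliers I would exploit that the coordinate change \eqref{sec1-eq8} is a diffeomorphism — with a $T$-periodic parametrisation — conjugating the unperturbed flow $\text{d}\bx/\text{d}\tau=\ba(\bx)$ to the $(\theta_i,\bR_i)$-flow \eqref{sec1-eq9}--\eqref{sec1-eq14} with $\varepsilon=0$, under which the limit cycle $\bx_\text{s}$ corresponds to the orbit $\theta_i(\tau)=\tau$, $\bR_i=0$. Linearising both flows about these corresponding orbits, the variational equation \eqref{sec1-eq22} is conjugate, via the $T$-periodic matrix $D\bh(\tau,0)=[\,\ba(\bx_\text{s}(\tau))\mid\bY(\tau)\,]$, to the linearisation of \eqref{sec1-eq9}--\eqref{sec1-eq14} about $(\tau,0)$. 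Concretely I would write a solution of \eqref{sec1-eq22} as $\by(\tau)=\ba(\bx_\text{s}(\tau))\,p(\tau)+\bY(\tau)\,\boldsymbol{q}(\tau)$ and project with $\bv_1^T(\tau)$ and $\bZ^T(\tau)$; using $\tfrac{\text{d}}{\text{d}\tau}\ba(\bx_\text{s}(\tau))=\tfrac{\partial\ba(\bx_\text{s}(\tau))}{\partial\bx}\,\ba(\bx_\text{s}(\tau))$ together with $\bZ^T\bY=I_{n-1}$, $\bZ^T\ba(\bx_\text{s})=0$ and $\bv_1^T\bY=0$, one obtains $\text{d}\boldsymbol{q}/\text{d}\tau=\boldsymbol{M}(\tau)\,\boldsymbol{q}$ with $\boldsymbol{M}(\tau)=\bL(\tau)+\partial_{\bR_i}\ba_{\bR}(\tau,0)$, decoupled from the scalar equation for $p$, whose right-hand side involves $\boldsymbol{q}$ alone — the decoupling being exactly the identities $a_{\theta}(\cdot,0)\equiv 0$, $\ba_{\bR}(\cdot,0)\equiv 0$ noted above. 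Hence the monodromy matrix of \eqref{sec1-eq22}, with eigenvalues $1,\mu_2,\ldots,\mu_n$, is similar to a block upper triangular matrix whose diagonal blocks are the $1\times 1$ monodromy of the $p$-equation (equal to $1$, carried by the $T$-periodic solution $\by=\ba(\bx_\text{s})$) and the monodromy of $\text{d}\boldsymbol{q}/\text{d}\tau=\boldsymbol{M}(\tau)\,\boldsymbol{q}$; the latter therefore has eigenvalues $\mu_2,\ldots,\mu_n$, which are precisely the characteristic multipliers of the equilibrium $\bR_i=0$ of \eqref{sec1-eq23}.

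I expect the main obstacle to be the bookkeeping that identifies $\boldsymbol{M}(\tau)$ from the projected variational equation as $\bZ^T(\tau)\big(\tfrac{\partial\ba(\bx_\text{s}(\tau))}{\partial\bx}\,\bY(\tau)-\tfrac{\text{d}\bY(\tau)}{\text{d}\tau}\big)$ and matches it against the linearisation $\bL(\tau)+\partial_{\bR_i}\ba_{\bR}(\tau,0)$ of \eqref{sec1-eq16}, together with the verification that the $\delta\theta_i$ terms genuinely decouple, which rests on $a_{\theta}(\cdot,0)\equiv 0$. The remaining ingredients are standard Floquet theory: conjugate $T$-periodic linear systems have the same characteristic multipliers, and the eigenvalues of a block triangular monodromy matrix are the union, with multiplicity, of the eigenvalues of its diagonal blocks.
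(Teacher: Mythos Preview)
Your argument is correct. The paper itself does not give a self-contained proof of this corollary: it simply refers the reader to Theorem~2 of \cite{bonnin2017a}. What you have written is essentially the computation that such a reference would contain, and the bookkeeping checks out. In particular, your identification
\[
\boldsymbol{M}(\tau)=\bZ^T(\tau)\!\left(\dfrac{\partial\ba(\bx_\text{s}(\tau))}{\partial\bx}\,\bY(\tau)-\dfrac{\text{d}\bY(\tau)}{\text{d}\tau}\right)
\]
does match $\bL(\tau)+\partial_{\bR_i}\ba_{\bR}(\tau,0)$ once one uses $a_{\theta}(\cdot,0)=0$ when differentiating \eqref{sec1-eq16}, and the block-triangular reduction of the monodromy via the decomposition $\by=\ba(\bx_\text{s})\,p+\bY\,\boldsymbol{q}$ is the standard way to peel off the trivial multiplier~$1$. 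So relative to the paper your proposal is strictly more informative: the paper outsources the argument, while you supply it.
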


\begin{proof}
	See \cite{bonnin2017a} theorem 2. 	
\end{proof}

The far reaching consequence of Corollary~\ref{corollary amplitude} is that, for weakly coupled oscillators that possess strongly attractive limit cycles, the phase dynamics can be effectively decoupled from the amplitude dynamics. In fact, for strongly stable limit cycles, it is reasonable to assume that the amplitude deviations remain  small, so that we can assume $\bR_i \approx 0$ and simplify \eqref{sec1-eq9} into
\begin{equation}
	\dfrac{\text{d} \theta_i}{\text{d} \tau} = 1 + \varepsilon \delta a_{\theta_i}(\theta_i) + \varepsilon C_{\theta_i}(\theta_1\ldots,\theta_N) \label{sec1-eq25}
\end{equation}
where 
\begin{align}
	\delta a_{\theta_i}(\theta_i) & = \dfrac{1}{|\ba(\bx_\text{s}(\theta_i)|} \, \bv_i^T(\theta_i) \, \bp_i\Big(\bx_\text{s}(\theta_i) \Big) \label{sec1-eq26}\\[2ex]
	C_{\theta_i}(\theta_1,\ldots,\theta_N) & = \dfrac{1}{|\ba(\bx_\text{s}(\theta_i)|} \, \bv_i^T(\theta_i) \, \bc_i \Big(\bx_\text{s}(\theta_1), \ldots, \bx_\text{s}(\theta_N)  \Big) \label{sec1-eq27}
\end{align}
Introducing the phase deviation $\psi_i = \theta_i - \tau$, representing the variation of the $i$-th oscillator phase with respect to the phase of the reference, and using \eqref{sec1-eq25}, we obtain
\begin{equation}
	\dfrac{\text{d} \psi_i}{\text{d} \tau} = \varepsilon \, \delta a_{\theta_i}(\psi_i + \tau) + \varepsilon \, C_{\theta_i}(\psi_1+\tau,\ldots,\psi_N+\tau) \label{sec1-eq28}
\end{equation}
For small values of $\varepsilon$, the phase deviation is therefore a slow variable. We can thus average integrating with respect to $\tau$ from 0 to $T = 2\pi$ without introducing a significant error,
obtaining the \emph{phase deviation equation}
\begin{equation}
	\dfrac{\text{d} \psi_i}{\text{d} \tau}   = \varepsilon \, \overline{\delta a}_{\theta_i}(\psi_i) + \varepsilon \, \overline C_{\theta_i}(\psi_1,\ldots,\psi_N) \quad \textrm{for} \; \; i=1,\ldots,N \label{sec1-eq29}
\end{equation}  
where
\begin{align}
	\overline{\delta a}_{\theta_i}(\psi_i) & = \dfrac{1}{2\pi} \int_0^{2\pi} \delta a_{\theta_i}(\psi_i+\tau) d\tau \label{sec1-eq30}\\[1ex]
	\overline C_{\theta_i}(\psi_1,\ldots,\psi_N) & = \dfrac{1}{2\pi} \int_0^{2\pi} C_{\theta_i}(\psi_1+\tau,\ldots,\psi_N+\tau) \, d \tau \label{sec1-eq31}
\end{align} 

Equilibrium points of the averaged phase deviation equation \eqref{sec1-eq29} corresponds to phase locked oscillations in the network of coupled oscillators. 

Thus, \eqref{sec1-eq29} can be used to design networks that exploit phase locked oscillations to perform fundamental logical operations.

\subsection{Example}

Before we consider the use \eqref{sec1-eq29} to design logic gates, we give an example of phase equation reduction at the single oscillator level. This will also serve as the starting point for further analysis. 

Consider the nonlinear oscillator described by \eqref{sec0-eq5}, that we rewrite in the form
\begin{subequations}
	\begin{align}
		\dfrac{\text{d}x}{\text{d}\tau} & = y \\[1ex]
		\dfrac{\text{d}y}{\text{d}\tau} & = - x + \alpha (1-y^2) y
	\end{align}\label{sec1-eq32}
\end{subequations}
where we have assumed $G_1 = G_3 = \alpha$. 

For second order systems, Floquet basis and co-basis can be computed using the formulas given in \cite{bonnin2012}, while for higher order systems, numerical methods are needed \cite{traversa2011,traversa2012,traversa2013}. Since the goal of this work is to illustrate fundamental principles avoiding technicalities as much as possible, we shall consider the weakly nonlinear version of \eqref{sec1-eq32} (i.e., we assume $\alpha \ll 1$). This permits to introduce some approximations allowing for the analytical determination of Floquet vectors and co-vectors. Moreover, these vectors take a particularly simple form that greatly simplifies the following discussion.

Introducing polar coordinates $x = A \cos \theta $, $y = A \sin \theta$, and averaging on the interval $[0,2\pi[$, \eqref{sec1-eq32} becomes
\begin{subequations}
	\begin{align}
		\dfrac{\text{d} \theta}{\text{d}\tau} = & 1\\[1ex]
		\dfrac{\text{d} A}{\text{d} \tau} = & \dfrac{\alpha}{2}A\left( 1 - \dfrac{3}{4}A^2\right) 
	\end{align} \label{sec1-eq33}
\end{subequations}
Equation \eqref{sec1-eq33} admits the asymptotic solution
\begin{equation}
	\bx_\text{s}(\tau) = \left( \begin{array}{c}
		\theta(\tau) \\
		A(\tau)
	\end{array} \right) = \left( \begin{array}{c}
		\tau \\
		2/\sqrt{3}
	\end{array} \right)
\end{equation}
Defining $\theta(\tau) = \tau \mod(2\pi)$, the solution is periodic and defines a limit cycle. The Jacobian matrix evaluated in the solution reads
\begin{equation}
	J|_{\bx_\text{s}} = \left( \begin{array}{cc} 
		0 & 0 \\
		0 & -\alpha
	\end{array} \right)
\end{equation}
The eigenvalues of the Jacobian matrix, $\lambda_1 = 0$ and $\lambda_2 = -\alpha$, are the cycle Floquet exponents. As expected, one exponent is null, the so called {\em structural exponent}, while the second one is negative, implying that the limit cycle is asymptotically stable. The eigenvectors of the Jacobian matrix $\bu_1 = [1,0]^T$ and $\bu_2 = [0,1]^T$, are the associated Floquet vectors. They form an orthogonal unitary basis, thus the co-vectors are $\bv_1 = \bu_1$ and $\bv_2 = \bu_2$. The Floquet vectors $\bu_1$, $\bu_2$ and co-vectors $\bv_1$, $\bv_2$ can be used to derive the amplitude-phase model and the phase deviation equation for a coupled network.

\section{Oscillator networks for Boolean logic}\label{boolean logic}

Boolean logic is based on binary variable states, let them be the ON and OFF states in electronic systems, 0 and 1 in computer science or $\left|\uparrow\right \rangle$ and $\left|\downarrow\right\rangle$ in quantum mechanics-based computations. In a coupled oscillators network the binary states correspond to in-phase and anti-phase locked oscillations, meaning that an oscillator must have the same phase (\emph{in-phase locking}), or have a phase difference equal to $\pi$ (\emph{anti-phase locking}) with respect to the reference oscillator.

\begin{figure}
	\centering
	\includegraphics[width=40mm,angle=-90]{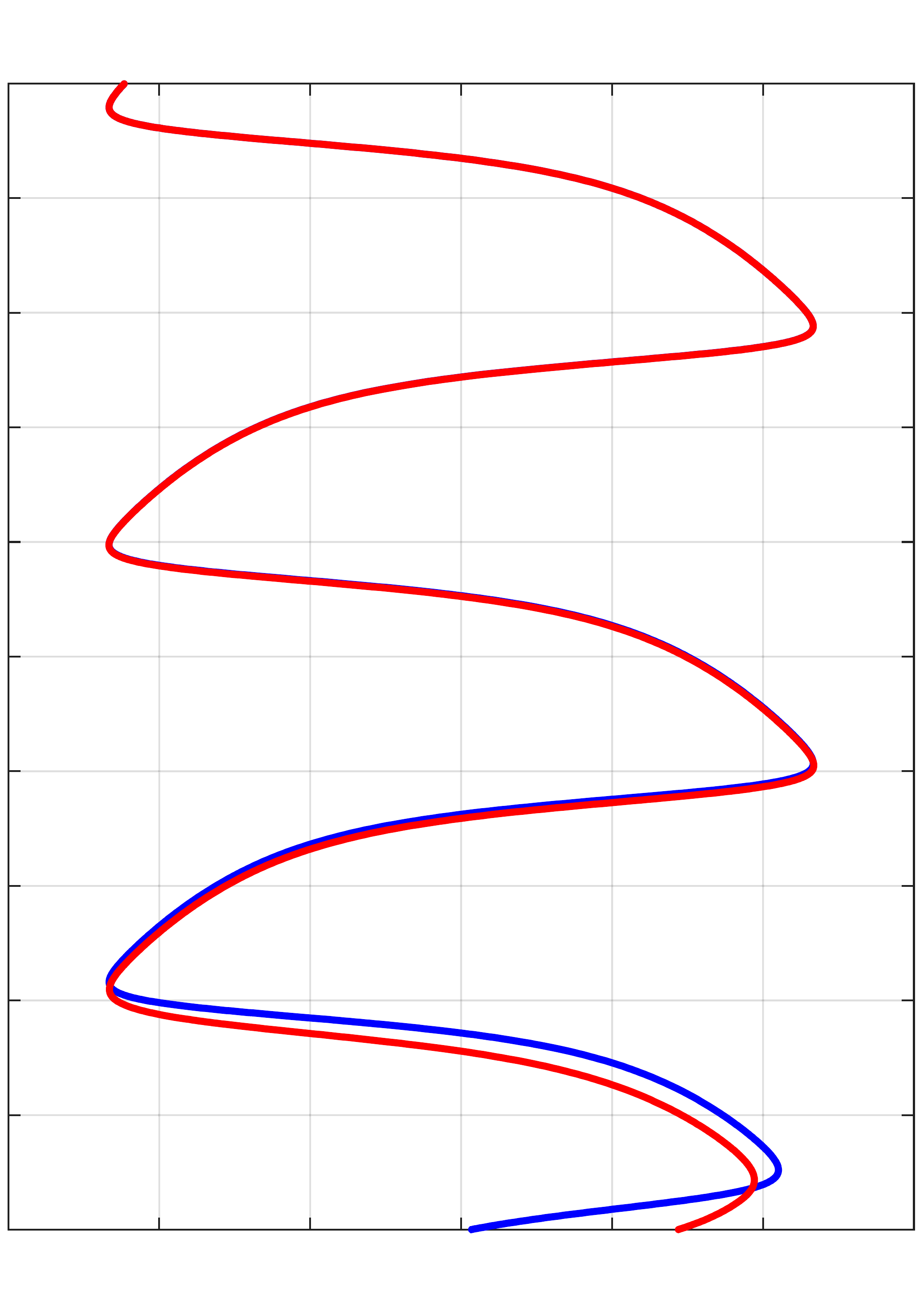}
	\includegraphics[width=40mm,angle=-90]{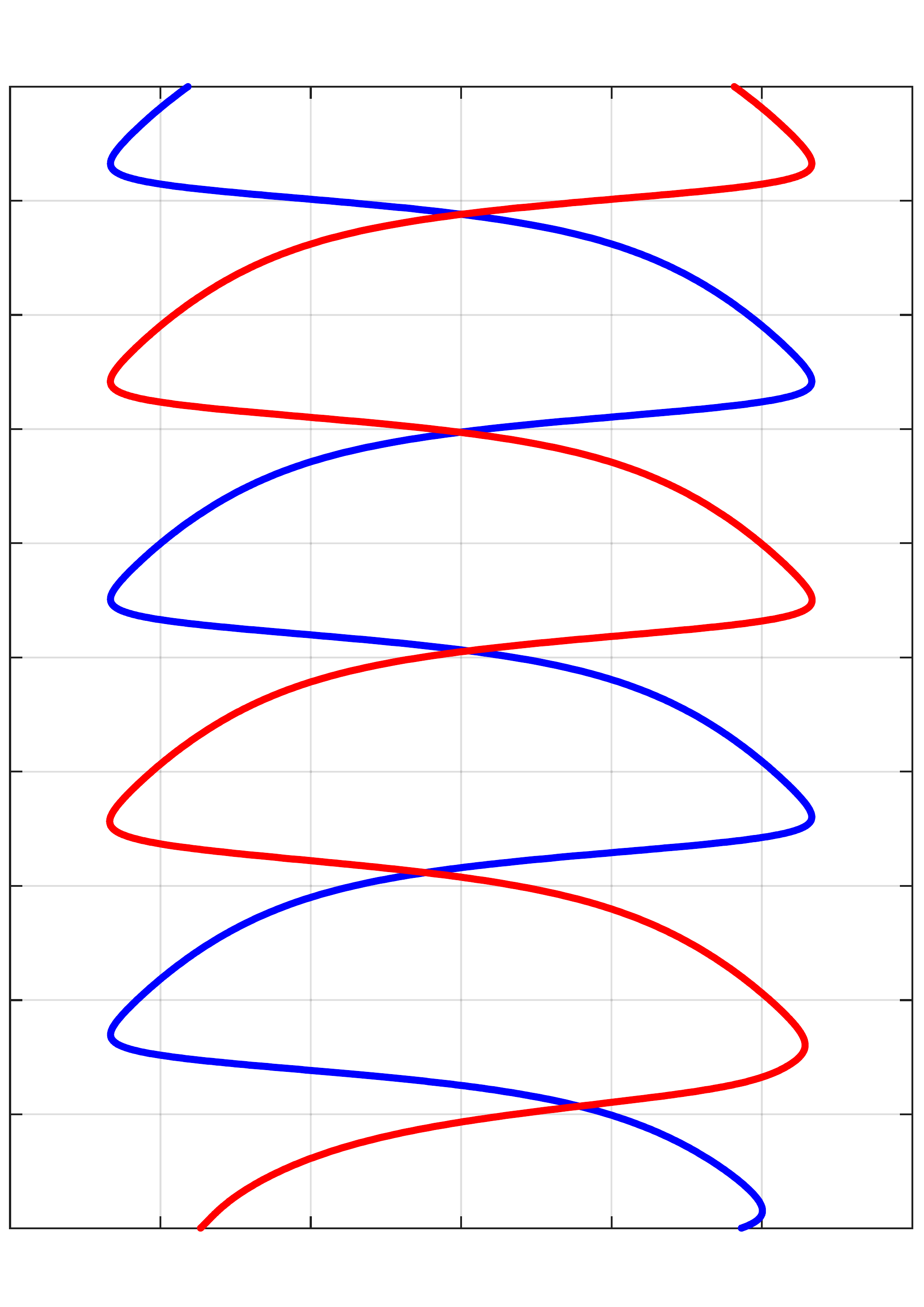}%
	\caption{Oscillator waveforms exhibiting phase locking after a short transient. On the left: In-phase locked waveforms. On the right: Anti-phase locked waveforms}\label{figure7}
\end{figure}

\subsection{Registers}
Information is stored into \textit{registers,} represented by oscillators connected to the central reference unit. As stated, the connection is one-directional, that is, the reference influences the registers, but not vice versa. By a proper engineering of the coupling, a register can be in-phase, or anti-phase, locked with the reference signal, corresponding to a stored datum equal to 0 or 1, respectively. 

Figure~\ref{figure4} shows a possible realization of a reference-register structure. The reference oscillator is represented by the controlled sources. Applying Kirchhoff laws the following state equations are obtained
\begin{subequations}
	\begin{align}
		\dfrac{\text{d} i_k}{dt} & = \dfrac{1}{L} v_k - \dfrac{R}{L} (i_k + i_R) \\[2ex]
		\dfrac{\text{d} v_k}{dt} & = -\dfrac{1}{C} - \dfrac{1}{C} g(v_k) - \dfrac{G}{C}(v_k - v_R)
	\end{align}\label{sec2-eq1}
\end{subequations} 
where $v_R$ and $i_R$ are the state variables of the reference, while $v_k$ and $i_k$ refer to the $k$-th register. We have also assumed that all oscillators are characterized by the same nonlinear conductance $g(v_k)$. After proper normalization
\begin{subequations}
	\begin{align}
		\dfrac{ \text{d}x_k}{\text{d} \tau} & = y_k - 2\rho (x_k + x_R) \\[2ex]
		\dfrac{\text{d} y_k}{\text{d} \tau} & = - x_k - g(y_k) - 2\gamma (y_k - y_R)
	\end{align}\label{sec2-eq2}
\end{subequations}
where $2\rho = R \sqrt{C/L}$ and $2\gamma = G \sqrt{L/C}$. Applying the procedure described in section \ref{phase equation} we obtain the phase deviation equation
\begin{equation}
	\dfrac{\text{d} \psi_k}{\text{d} \tau} = (\rho-\gamma) \sin \psi_k \label{sec2-eq3}
\end{equation}

\begin{theorem}[Register]\label{register}
	Consider the phase equation \eqref{sec2-eq3}. For $\gamma> \rho$, $\overline \psi_k = 0$ is an asymptotically stable equilibrium point, while $\overline \psi_k = \pi$ is unstable. For $\rho>\gamma$ the situation is reversed.   
\end{theorem}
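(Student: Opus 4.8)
The plan is to read \eqref{sec2-eq3} as a scalar autonomous ODE for $\psi_k$ (taken modulo $2\pi$, i.e. on the circle) and to run a standard one-dimensional phase-line stability analysis. First I would locate the equilibria: setting the right-hand side to zero gives $\sin\psi_k = 0$, hence $\overline\psi_k \in \{0,\pi\}$ (mod $2\pi$), which are precisely the in-phase and anti-phase locked states. Writing $f(\psi_k) = (\rho-\gamma)\sin\psi_k$, one computes $f'(\psi_k) = (\rho-\gamma)\cos\psi_k$, so $f'(0) = \rho-\gamma$ and $f'(\pi) = \gamma-\rho$. By Lyapunov's first-approximation (linearization) criterion for scalar ODEs, an equilibrium is asymptotically stable when $f'<0$ there and unstable when $f'>0$. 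Hence for $\gamma>\rho$ the point $\overline\psi_k=0$ has $f'(0)=\rho-\gamma<0$ (asymptotically stable) while $\overline\psi_k=\pi$ has $f'(\pi)=\gamma-\rho>0$ (unstable); reversing the hypothesis to $\rho>\gamma$ flips both signs and therefore exchanges the two roles, giving the claimed reversal.

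To upgrade this to a global statement — natural here, since the point of the register is to be a robust bistable cell — I would also exhibit a potential function. Setting $U(\psi_k) = (\rho-\gamma)\cos\psi_k$, equation \eqref{sec2-eq3} becomes $\dot\psi_k = -U'(\psi_k)$, so $U$ is non-increasing along trajectories and strictly decreasing off the equilibria. For $\gamma>\rho$ the coefficient $\rho-\gamma$ is negative, so $U$ attains its unique minimum on the circle at $\psi_k=0$ and its unique maximum at $\psi_k=\pi$; LaSalle's invariance principle then shows that every trajectory starting anywhere except $\pi$ converges to $0$, i.e. $\overline\psi_k=0$ is asymptotically stable with basin of attraction the whole circle minus the single point $\pi$, while $\pi$ is unstable. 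The case $\rho>\gamma$ follows verbatim with the roles of $0$ and $\pi$ exchanged, since then $\rho-\gamma>0$ places the minimum of $U$ at $\psi_k=\pi$.

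There is essentially no hard step; the only things worth flagging are (i) the degenerate case $\rho=\gamma$, for which \eqref{sec2-eq3} reduces to $\dot\psi_k=0$ and every phase is a (non-asymptotically-stable) equilibrium, so $\rho\neq\gamma$ is implicitly required; and (ii) that "asymptotic stability" is here meant for the \emph{averaged, reduced} phase-deviation equation \eqref{sec2-eq3}, whose derivation from the full circuit equations \eqref{sec2-eq2} was already justified in Section~\ref{phase equation} under the weak-coupling and strongly-attracting-limit-cycle assumptions. The conclusion then transfers to the physical register as the statement that, after a short transient, it locks in-phase (stored datum $0$) or anti-phase (stored datum $1$) with the reference according to the sign of $\gamma-\rho$.
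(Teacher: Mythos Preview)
Your argument is correct and is essentially the same one-dimensional phase-line analysis the paper gives: the paper simply notes that for $\rho>\gamma$ one has $\dot\psi_k>0$ on $(0,\pi)$ and $\dot\psi_k<0$ on $(\pi,2\pi)$, which is the sign-of-$f$ version of your sign-of-$f'$ linearization, and already yields the global statement you then recover via the potential $U$. Your extra remarks on the degenerate case $\rho=\gamma$ and on the scope of the reduction are fine additions but not needed for the theorem as stated.
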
	

\begin{proof}
	Equation \eqref{sec2-eq3} has two equilibrium points, $\overline \psi_k = 0$ and $\overline \psi_k = \pi$.\\ 
	For $\rho > \gamma$, 	$\text{d} \psi_k/\text{d}\tau>0$ for $0<\psi<\pi$ and $\text{d} \psi_k/\text{d}\tau<0$ for $\pi<\psi<2\pi$, implying that $\overline \psi_k=0$ is unstable, while $\overline \psi_k=\pi$ is asymptotically stable. For $\rho<\gamma$ the situation is reversed. 
\end{proof}

Therefore, controlling the  value of parameters $\rho$ and $\gamma$ it is possible to switch the state of the register between the two digital values.

\begin{figure}
	\centering
	\includegraphics[width=80mm]{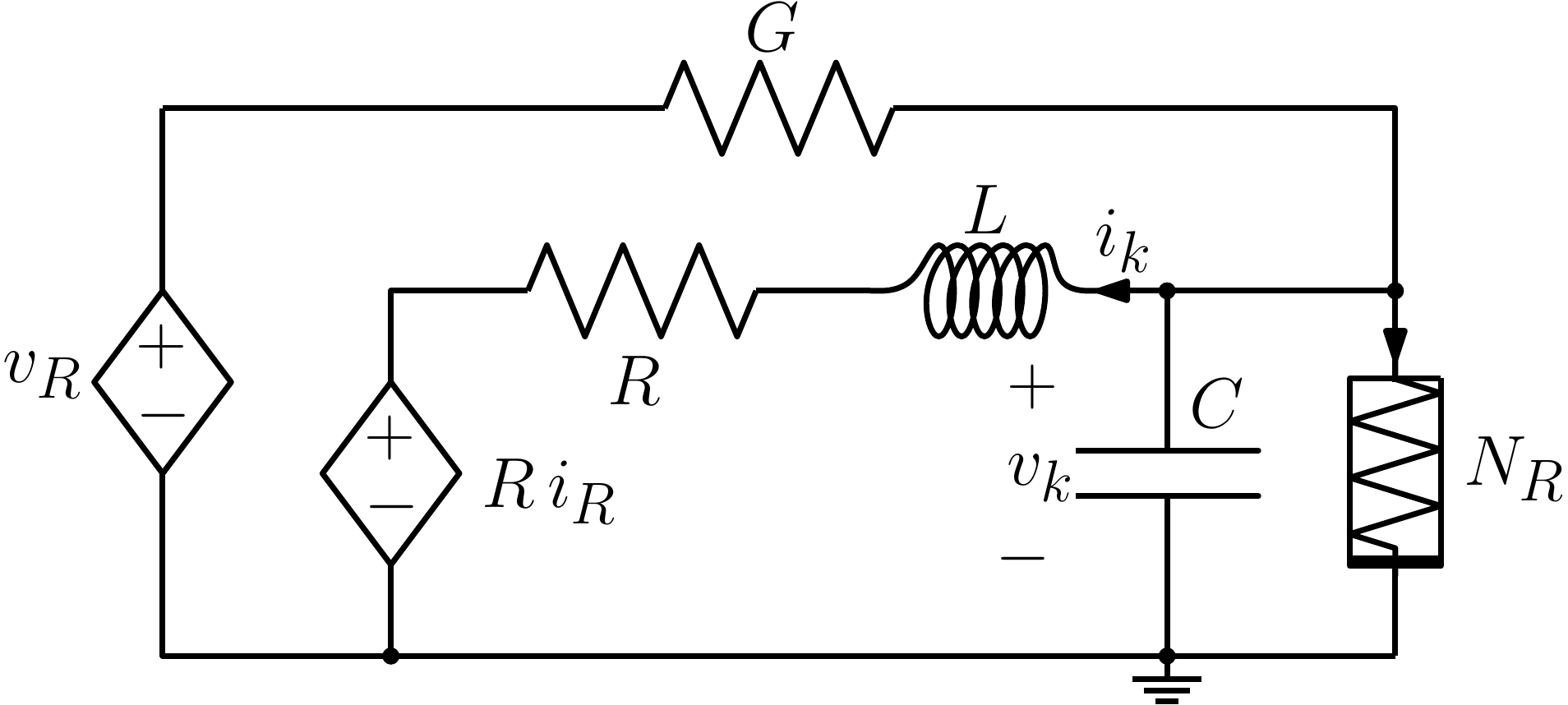}%
	\caption{Realization of the reference-register. Controlled sources represent the reference.}\label{figure4}
\end{figure}

\subsection{Logic gates}

Boolean computation requires the availability of a {\em logically} or {\em functionally complete} \cite{wernick1942,roychowdhury2015} set of elementary logical operations that can be composed to implement any combinational logic function. When logic states are encoded into phases, it is convenient to use the set composed by the NOT and the MAJORITY gates \cite{roychowdhury2015}.

\subsubsection{NOT gate}
A NOT gate is a single input, single output gate, that implements logical negation. An example of coupled oscillators realizing an invertible NOT gate is shown in figure~\ref{figure5}. 

\begin{figure}
	\centering
	\includegraphics[angle=-90,width=120mm]{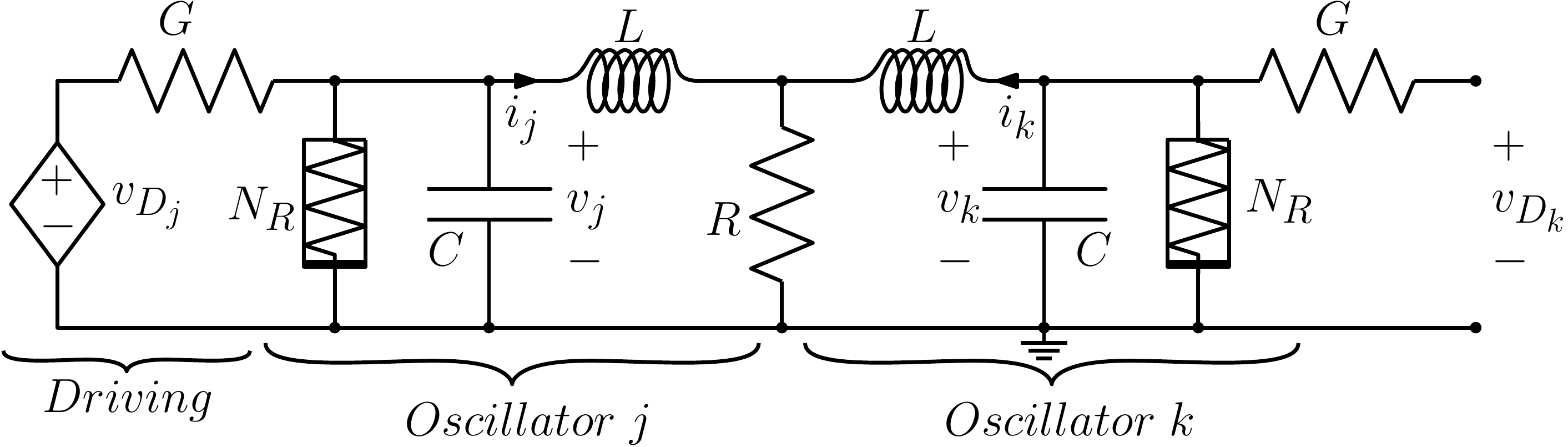}%
	\caption{Reversible NOT gate. The controlled voltage sources represent the driving signal used to select the input state.}\label{figure5}
\end{figure}

Two oscillators, denoted as $j$ and $k$, are coupled  through  resistor $R$. One oscillator represents the input, while the other is the output. The gate is denoted as {\em reversible} because the input and output oscillators can be interchanged without impairing the operation. For instance, in figure~\ref{figure5} oscillator $j$ is the input while oscillator $k$ is the output. The state of the input is controlled through the coupling with a driving oscillator unit, here represented by the controlled voltage source $v_{D_j}$, while the output is  voltage $v_{D_k} = v_k$.

Applying KVL and KCL, and  proper normalization, the state equations become
\begin{subequations}
	\begin{align}
		\dfrac{\text{d} x_j}{\text{d} \tau} & = y_j - 2\rho (x_j + x_k) \\[1ex]
		\dfrac{\text{d} y_j}{\text{d} \tau} & = -x_j - g(y_j) - 2\gamma (y_j - y_{D_j})\\[1ex]
		\dfrac{\text{d} x_k}{\text{d} \tau} & = y_k - 2\rho (x_k + x_j) \\[1ex]
		\dfrac{\text{d} y_k}{\text{d} \tau} & = -x_k - g(y_k) 
	\end{align}\label{sec2-eq4}
\end{subequations}
The corresponding phase deviation equations read
\begin{subequations}
	\begin{align}
		\dfrac{\text{d} \psi_j}{\text{d}\tau} & = \rho \sin(\psi_j-\psi_k) - \gamma \sin(\psi_j - \psi_{D_j}) \\[2ex]
		\dfrac{\text{d} \psi_k}{\text{d}\tau} & = \rho \sin(\psi_k-\psi_j) 
	\end{align}\label{sec2-eq5}
\end{subequations}

The following theorem proves that the phase deviation equations \eqref{sec2-eq5} corresponds to a NOT gate.

\begin{theorem}[NOT gate]\label{not gate}
	Consider the phase deviation equation \eqref{sec2-eq5}, with $\rho>0$ and $\gamma>0$. Let $\psi_{D_j} = 0$ (respectively $\psi_{D_j} = \pi$), then $(\overline{\psi}_j ,\overline \psi_k) = (0,\pi)$ (respectively $(\overline{\psi}_j, \overline \psi_k) = (\pi,0)$) is an asymptotically stable equilibrium point.
\end{theorem}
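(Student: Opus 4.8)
The plan is to establish \emph{local} asymptotic stability by linearization (Lyapunov's indirect method / Hartman--Grobman), which suffices because the claimed equilibria will turn out to be hyperbolic. First I would verify that the indicated points are genuine equilibria of the reduced system \eqref{sec2-eq5}. Fixing $\psi_{D_j}=0$ and substituting $(\psi_j,\psi_k)=(0,\pi)$ gives $\rho\sin(-\pi)-\gamma\sin 0 = 0$ and $\rho\sin\pi = 0$; the case $\psi_{D_j}=\pi$, $(\psi_j,\psi_k)=(\pi,0)$, is checked the same way. (The phases are understood modulo $2\pi$, so these are well-defined points on the torus.)

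Next I would compute the Jacobian of the vector field in \eqref{sec2-eq5}. Writing $f_1=\rho\sin(\psi_j-\psi_k)-\gamma\sin(\psi_j-\psi_{D_j})$ and $f_2=\rho\sin(\psi_k-\psi_j)$, one has
\[
J(\psi_j,\psi_k)=\begin{pmatrix}
\rho\cos(\psi_j-\psi_k)-\gamma\cos(\psi_j-\psi_{D_j}) & -\rho\cos(\psi_j-\psi_k)\\[1ex]
-\rho\cos(\psi_k-\psi_j) & \rho\cos(\psi_k-\psi_j)
\end{pmatrix}.
\]
At $(0,\pi)$ with $\psi_{D_j}=0$ the arguments $\psi_j-\psi_k$ and $\psi_k-\psi_j$ equal $\pm\pi$ (cosine $-1$), while $\psi_j-\psi_{D_j}=0$ (cosine $1$); evaluating at $(\pi,0)$ with $\psi_{D_j}=\pi$ gives the identical values. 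In both cases
\[
J=\begin{pmatrix} -(\rho+\gamma) & \rho\\[1ex] \rho & -\rho\end{pmatrix}.
\]

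Finally I would read off stability from the trace and determinant: $\operatorname{tr}J=-(2\rho+\gamma)<0$ and $\det J=\rho(\rho+\gamma)-\rho^2=\rho\gamma>0$ for all $\rho,\gamma>0$. Hence both eigenvalues of $J$ have strictly negative real part, the equilibrium is hyperbolic, and it is therefore asymptotically stable, which is exactly the claim. The argument is essentially mechanical; the only point demanding care is the sign bookkeeping when the cosines are evaluated at the equilibrium, together with the $2\pi$-periodicity of the phases. If one wanted more --- an explicit estimate of the basin of attraction, or convergence to one of the two locked states from a large set of initial data --- the harder step would be constructing a Lyapunov function or exploiting the (near-)gradient structure of \eqref{sec2-eq5}; but for the local statement asserted here, linearization is all that is needed.
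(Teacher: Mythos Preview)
Your linearization argument is correct: the Jacobian is computed accurately, the trace/determinant signs are right, and hyperbolicity yields local asymptotic stability, which is exactly what the theorem asserts.

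The paper proceeds differently: it builds an explicit strict Lyapunov function
\[
V(\psi_j,\psi_k) = -\gamma\cos\psi_{D_j}\,(\cos\psi_j - 1) + \rho\,(\cos(\psi_j-\psi_k)+1)
\]
(with the constant shifted appropriately for the other equilibrium), checks $V(\overline\psi_j,\overline\psi_k)=0$ and local positive definiteness, and then observes that $\partial V/\partial\psi_j = -\text{d}\psi_j/\text{d}\tau$ and $\partial V/\partial\psi_k = -\text{d}\psi_k/\text{d}\tau$, so that $\text{d}V/\text{d}\tau = -(\text{d}\psi_j/\text{d}\tau)^2 - (\text{d}\psi_k/\text{d}\tau)^2$. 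In other words, the paper exhibits the system as a gradient flow $\dot\psi = -\nabla V$, which you flagged as the natural route to basin-of-attraction estimates. Your approach is quicker and entirely sufficient for the local statement; the paper's approach makes the gradient structure explicit and, more importantly, provides the template that is reused verbatim in the proofs of the AND and OR gate theorems that follow, where three phases are involved and the Lyapunov construction scales more gracefully than tracking eigenvalues of a $3\times 3$ matrix (though the paper does fall back on linearization for one sub-case of the OR gate).
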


\begin{proof}
	It is trivial to observe that $(\overline \psi_{j},\overline \psi_{k}) = (0,\pi)$ and $(\overline \psi_{j},\overline \psi_{k}) = (\pi,0)$ are equilibrium points of \eqref{sec2-eq5}. The proof is based on the idea to show the existence of a strict Liapunov function in a neighborhood of the equilibrium point. The existence of such a Liapunov function implies that the equilibrium point is asymptotically stable.
	
	First, we consider the case $\psi_{D_j} = 0$, with equilibrium point $(\overline \psi_j, \overline \psi_k) = (0,\pi)$. Expanding one of the sine functions, we rewrite \eqref{sec2-eq5} in the form 
	\begin{subequations}
		\begin{align}
			\dfrac{\text{d} \psi_j}{\text{d}\tau} & = \rho \sin(\psi_j-\psi_k) -  \gamma \cos \psi_{D_j} \sin \psi_j  \\[2ex]
			\dfrac{\text{d} \psi_k}{\text{d}\tau} & = \rho \sin(\psi_k-\psi_j) 
		\end{align}\label{sec2-eq6}
	\end{subequations}
	
	Consider the function $V:[0,2\pi[ \times [0,2\pi[ \mapsto \R$
	\begin{equation}
		V(\psi_j,\psi_k) =  - \gamma \cos \psi_{D_j} \left( \cos \psi_j -1 \right) +  \rho \left( \cos(\psi_j - \psi_k) +1 \right)  \label{sec2-eq7}
	\end{equation}
	
	By definition, $V(\overline \psi_m,\overline \psi_n) = 0$, and $V(\psi_j,\psi_k) >0$ for all $\psi_j,\psi_k$ in a punctured neighborhood of $(\overline \psi_{j},\overline \psi_{k}) = (0,\pi)$.
	
	Finally, using definition \eqref{sec2-eq6}, we have
	\begin{equation}
		\dfrac{\text{d} V}{\text{d} \tau } = \dfrac{\partial V}{\partial \psi_j} \dfrac{\text{d} \psi_j}{\text{d} \tau} + \dfrac{\partial V}{\partial \psi_k}\dfrac{\text{d} \psi_k}{\text{d} \tau} = -  \left(\dfrac{\text{d} \psi_j}{\text{d} \tau} \right)^2 -  \left(\dfrac{\text{d} \psi_k}{\text{d} \tau} \right)^2 <0  \label{sec2-eq8}
	\end{equation}
	which proves that $V(\psi_j,\psi_k)$ is a strict Liapunov function in a neighborhood of $(0,\pi)$, as required.
	
	The case $\psi_{D_j}=\pi$, with equilibrium point $(\overline \psi_j,\overline \psi_k) = (\pi,0)$, is proved similarly, showing that the function
	\begin{equation}
		V(\psi_j,\psi_k) = - \gamma \cos \psi_{D_j} \left( \cos \psi_j + 1 \right) + \rho \left( \cos(\psi_j - \psi_k) +1 \right)  \label{sec2-eq9} 
	\end{equation}
	is a strict Liapunov function in the neighborhood of $(\pi,0)$.
	$\qed$
\end{proof}

\subsubsection{AND gate and OR gate}

A MAJORITY gate is a three inputs, one output logic gate that implements both the AND and the OR logic operations according to the truth table shown in table~\ref{tabella}. 

\begin{table}[tb]
	\centering
	\begin{tabular}{|c|c|c|c|c|}
		\hline
		& \multicolumn{1}{c|}{\bf{Reference}} & \multicolumn{1}{c|}{\bf{Input 1}} & \multicolumn{1}{c|}{\bf{Input 2}} & \multicolumn{1}{c|}{\bf{Output}} \\
		\hline
		\parbox[t]{2mm}{\multirow{3}{*}{\rotatebox[origin=c]{90}{\bf{AND}}}} & 0 & 0& 0& 0\\
		& 0 & 1 & 0 & 0\\
		& 0 & 0 & 1 & 0\\
		& 0 & 1 & 1 & 1\\
		\hline
		\parbox[t]{2mm}{\multirow{3}{*}{\rotatebox[origin=c]{90}{\bf{OR}}}} & 1 & 0& 0& 0\\
		& 1 & 1 & 0 & 1\\
		& 1 & 0 & 1 & 1\\
		& 1 & 1 & 1 & 1\\
		\hline
	\end{tabular}
	\vskip 1ex
\caption{Truth table for the MAJORITY gate.\label{tabella}}
\end{table}

An example of reversible MAJORITY gate implementation based on coupled oscillators is shown in figure \ref{figure6}. The first input is represented by the controlled voltage source $v_D$, which is applied to all oscillators and is used to set the function implemented by the gate. If the driving oscillator is set to one ($v_D$ is locked in-phase with the reference, in this case realized by the same driving circuit), then the MAJORITY gate realizes an OR gate. Conversely, if the driving register is set to zero ($v_D$ is anti-phase locked with the driving oscillator, this is realized applying a NOT to the reference), then the MAJORITY gate realizes an AND gate. This MAJORITY gate is called reversible because any pair of oscillators, i.e. $(i,j)$, $(i,k)$, or $(j,k)$, can be used as input, with the remaining oscillator representing the output. In this example, the pair $(i,j)$ defines the inputs, whilst  $k$ is the output. The input states are established applying driving signals $v_{D_i}$ and $v_{D_j}$ that can be set to one (in-phase locking with the reference) or zero (anti-phase locking). Voltage $v_{D_k} = v_k$ represents the output.
   
\begin{figure}[tb]
	\centering
	\includegraphics[angle=0,width=100mm]{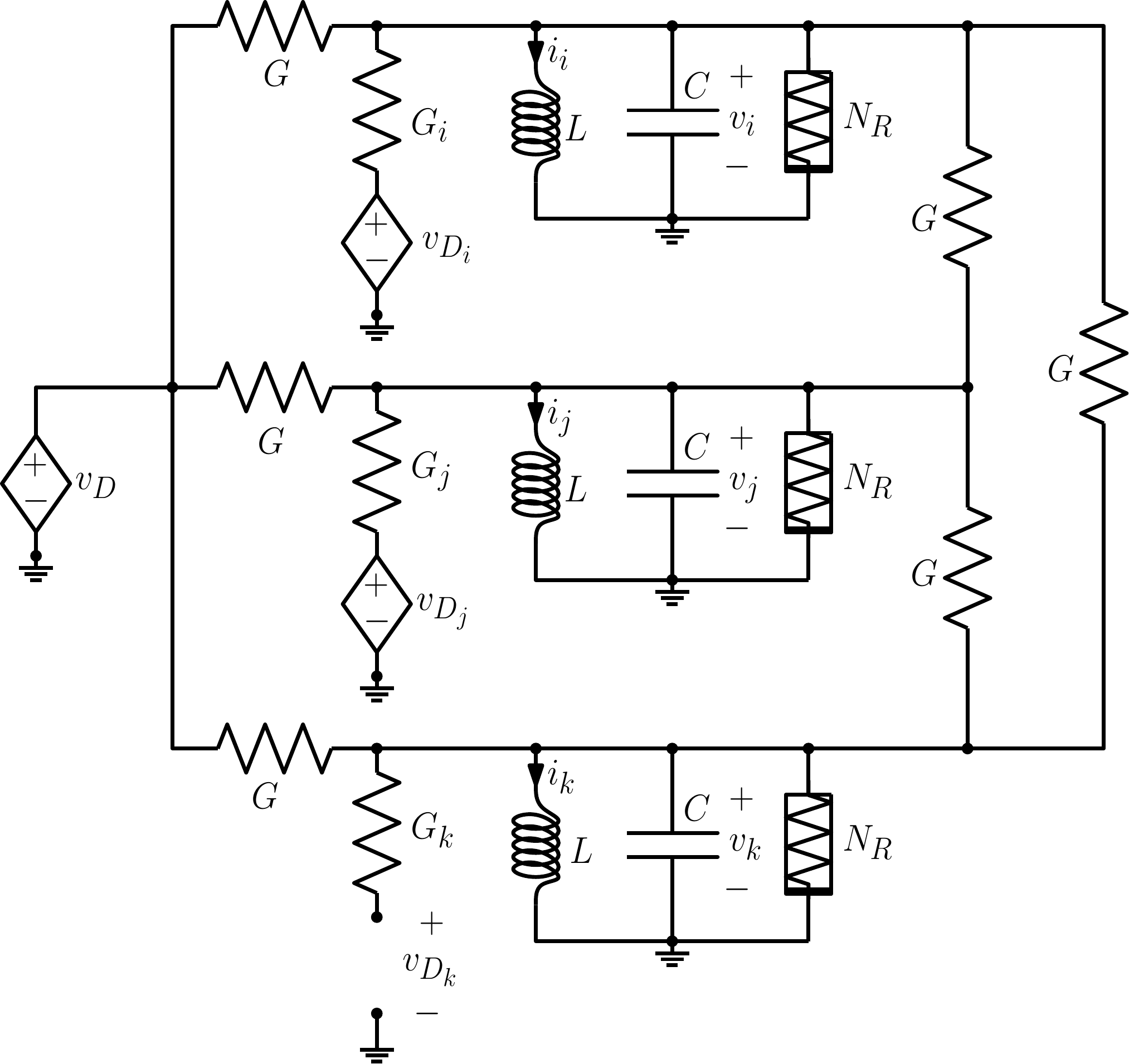}%
	\caption{Reversible MAJORITY gate. The controlled voltage sources represent the driving signals used to set the input state.}\label{figure6}
\end{figure}

Applying KVL and KCL, and with proper normalization, the state equations become
\begin{subequations}
	\begin{align} \label{sec2-eq10}
		\dfrac{\text{d}x_i}{ \text{d} \tau} & = y_i\\[2ex]
		\dfrac{\text{d} y_i}{\text{d} \tau} & = -x_i -g(x_i) - 2\gamma_i (y_i - y_{D_i}) - 2\gamma \sum_{m=j,k} (y_i-y_m ) - 2\gamma (y_i - y_D)\\[2ex]
		\dfrac{\text{d}x_j}{ \text{d} \tau} & = y_j\\[2ex]
		\dfrac{\text{d} y_j}{\text{d} \tau} & = -x_j -g(x_j) - 2\gamma_j (y_j - y_{D_j}) - 2\gamma \sum_{m=i,k} (y_j-y_m ) - 2\gamma (y_j - y_D)\\[2ex]
		\dfrac{\text{d}x_k}{ \text{d} \tau} & = y_k\\[2ex]
		\dfrac{\text{d} y_k}{\text{d} \tau} & = -x_k -g(x_k)  - 2\gamma \sum_{m=i,j} (y_k-y_m ) - 2\gamma (y_k - y_D)
	\end{align} 
\end{subequations}

After proper normalization, the resulting phase deviation equations read 
\begin{subequations}
	\begin{align}
		\dfrac{\text{d} \psi_i}{\text{d} \tau} & = - \gamma_i \sin (\psi_i - \psi_{D_i}) - \gamma \sin (\psi_i - \psi_D) - \gamma \sum_{m=j,k} \sin(\psi_i-\psi_m) \\[2ex]
		\dfrac{\text{d} \psi_j}{\text{d} \tau} & = - \gamma_j \sin (\psi_j - \psi_{D_j}) - \gamma \sin (\psi_j - \psi_D) - \gamma \sum_{m=i,k} \sin(\psi_j-\psi_m) \\[2ex]
		\dfrac{\text{d} \psi_k}{\text{d} \tau} & = - \gamma \sin (\psi_k - \psi_D) - \gamma \sum_{m=i,j} \sin(\psi_k-\psi_m)  
	\end{align} \label{sec2-eq11}
\end{subequations}
We are now in the position to prove that \eqref{sec2-eq11} realize both an AND and an OR gate.

\begin{theorem}[AND gate]\label{and gate}
	Consider the phase deviation equation \eqref{sec2-eq11}, with  $\psi_{D} = 0$ and $\gamma_i = \gamma_j > \gamma>0$. Then:
	\begin{enumerate}[label=(\alph*)]
		\item \quad If $\psi_{D_i} = \psi_{D_j} = 0$, then $(\overline \psi_i , \overline \psi_j , \overline \psi_k) = (0,0,0)$ is an asymptotically stable equilibrium point.
		
		\item \quad If $\psi_{D_i} = \pi$, $\psi_{D_j} = 0$ (respectively $\psi_{D_i} = 0$,  $\psi_{D_j} = \pi$), then $(\overline \psi_i, \overline \psi_j, \overline \psi_k) = (\pi,0,0)$ (respectively $(\overline \psi_i,\overline \psi_j ,\overline \psi_k) = (0,\pi,0)$) is an asymptotically stable equilibrium point.
		
		\item \quad If $\psi_{D_i} = \psi_{D_j} = \pi$ then $(\overline \psi_i, \overline \psi_j, \overline \psi_k) = (\pi,\pi,\pi)$ is an asymptotically stable equilibrium point.
	\end{enumerate}
\end{theorem}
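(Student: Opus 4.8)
The plan is to adapt the Liapunov-function argument of Theorem~\ref{not gate}, exploiting the fact that, once $\psi_D=0$ is fixed, the system \eqref{sec2-eq11} is a \emph{gradient flow}. Writing $\boldsymbol\psi=(\psi_i,\psi_j,\psi_k)$ and
\[
V(\boldsymbol\psi)=-\gamma_i\cos(\psi_i-\psi_{D_i})-\gamma_j\cos(\psi_j-\psi_{D_j})-\gamma\big(\cos\psi_i+\cos\psi_j+\cos\psi_k\big)-\gamma\big(\cos(\psi_i-\psi_j)+\cos(\psi_i-\psi_k)+\cos(\psi_j-\psi_k)\big),
\]
a one-line differentiation shows $\text{d}\psi_l/\text{d}\tau=-\partial V/\partial\psi_l$ for $l\in\{i,j,k\}$, hence along trajectories $\text{d}V/\text{d}\tau=-\big((\dot\psi_i)^2+(\dot\psi_j)^2+(\dot\psi_k)^2\big)\le 0$, exactly the three-variable analogue of \eqref{sec2-eq8}, with $\text{d}V/\text{d}\tau$ vanishing only at equilibria. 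So it suffices, in each of the cases (a)--(c), to show that the claimed point $\overline{\boldsymbol\psi}$ is a \emph{strict, non-degenerate local minimum} of $V$: then $\widetilde V:=V-V(\overline{\boldsymbol\psi})$ is positive on a punctured neighbourhood of $\overline{\boldsymbol\psi}$, a non-degenerate minimum is isolated among the critical points of $V$ (hence among the equilibria of \eqref{sec2-eq11}), so $\text{d}\widetilde V/\text{d}\tau<0$ on a small enough punctured ball, making $\widetilde V$ a strict Liapunov function there, and asymptotic stability follows.

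First I would check that each $\overline{\boldsymbol\psi}$ is an equilibrium --- immediate, since every sine in \eqref{sec2-eq11} has argument $0$ or $\pm\pi$ there. The substantive step is positive definiteness of the Hessian $H=D^2V(\overline{\boldsymbol\psi})$, whose entries are $H_{ll}=\gamma_l\cos(\psi_l-\psi_{D_l})+\gamma\cos\psi_l+\gamma\sum_{m\ne l}\cos(\psi_l-\psi_m)$ (with the convention $\gamma_k=0$) and $H_{lm}=-\gamma\cos(\psi_l-\psi_m)$ for $l\ne m$, evaluated at $\overline{\boldsymbol\psi}$. In case (a), $\overline{\boldsymbol\psi}=(0,0,0)$ and every cosine equals $1$, so the quadratic form reduces to $\boldsymbol\psi^{T}H\boldsymbol\psi=\gamma_i\psi_i^2+\gamma_j\psi_j^2+\gamma\big(\psi_i^2+\psi_j^2+\psi_k^2\big)+\gamma\big((\psi_i-\psi_j)^2+(\psi_i-\psi_k)^2+(\psi_j-\psi_k)^2\big)$, manifestly positive definite for any $\gamma_i,\gamma_j,\gamma>0$; equivalently $\widetilde V$ may be written directly as a sum of the nonnegative terms $\gamma_l(1-\cos\psi_l)$ and $\gamma(1-\cos(\psi_l-\psi_m))$, which vanishes only at $(0,0,0)$ and thus certifies a large basin of attraction. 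In cases (b) and (c) the nonzero driving phases flip the signs of several cosines --- for instance at $(\pi,0,0)$ one gets $H_{ii}=\gamma_i-3\gamma$, $H_{ij}=H_{ik}=\gamma$, $H_{jj}=\gamma_j+\gamma$, $H_{kk}=\gamma$, $H_{jk}=-\gamma$, and at $(\pi,\pi,\pi)$ one gets $H_{ii}=H_{jj}=\gamma_i+\gamma$, $H_{kk}=\gamma$ with all off-diagonal entries $-\gamma$ --- and positive definiteness is then verified through Sylvester's criterion: the three leading principal minors, polynomials in $\gamma_i=\gamma_j$ and $\gamma$, are positive under the stated coupling-strength assumption, which is precisely where the relative size of $\gamma_i=\gamma_j$ and $\gamma$ enters.

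I expect the Sylvester bookkeeping in (b) and (c) to be the main obstacle: without the all-ones cosine structure of case (a), $H$ is no longer an obviously positive matrix, and one must track how the sign flips induced by $\psi_{D_i}\ne\psi_{D_j}$ propagate into the $2\times2$ and $3\times3$ minors and isolate the sharp inequality between $\gamma_i$ and $\gamma$ that makes them all positive. Once $H$ is positive definite the remainder is routine and parallels Theorem~\ref{not gate}: $\overline{\boldsymbol\psi}$ is an isolated critical point of $V$, $\widetilde V>0$ on a punctured neighbourhood while $\text{d}\widetilde V/\text{d}\tau=-\|\nabla V\|^2<0$ there, and the strict-Liapunov-function conclusion gives asymptotic stability. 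As a shortcut one could instead use the indirect (linearization) method, since positive definiteness of $H$ means the Jacobian $-H$ at $\overline{\boldsymbol\psi}$ is Hurwitz; but the Liapunov route is preferable as it mirrors the earlier proofs and retains information about the basin.
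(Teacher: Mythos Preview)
Your strategy is exactly the paper's: recognise the gradient structure $\dot{\boldsymbol\psi}=-\nabla V$, so that $\text{d}V/\text{d}\tau=-\sum_l(\dot\psi_l)^2$, and then argue that $\widetilde V=V-V(\overline{\boldsymbol\psi})$ is a strict Liapunov function near the claimed equilibrium. Your potential $V$ and the paper's differ only by the additive constants the paper calls $N_\alpha,N_{\alpha\beta}$; the paper simply asserts that positivity of $\widetilde V$ on a punctured neighbourhood is ``easy to verify'', while you propose the more honest route of checking that the Hessian $H=D^2V(\overline{\boldsymbol\psi})$ is positive definite.

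That very honesty exposes a genuine gap. In case~(b) you already computed $H_{ii}=\gamma_i-3\gamma$ at $(\pi,0,0)$. A symmetric matrix with a negative diagonal entry cannot be positive definite, so Sylvester's criterion (in any ordering) fails whenever $\gamma<\gamma_i\le 3\gamma$; since the vector field is $-\nabla V$, the Jacobian then has a strictly positive eigenvalue and the equilibrium is in fact linearly unstable in that range. Case~(c) has the same difficulty: with your Hessian at $(\pi,\pi,\pi)$ one finds $\det H=\gamma(\gamma_i-2\gamma)(\gamma_i+2\gamma)$, which is negative for $\gamma<\gamma_i<2\gamma$, so again $H$ is indefinite and the point is a saddle of $V$. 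Thus the sentence ``positive under the stated coupling-strength assumption'' is where your plan breaks down: the hypothesis $\gamma_i=\gamma_j>\gamma$ is not strong enough to make the Hessian positive definite in (b) and (c), and no Liapunov argument based on this $V$ can succeed there. The paper's proof glosses over precisely this point with ``it is easy to verify''; your Sylvester bookkeeping, carried out in full, would instead reveal that a sharper lower bound on $\gamma_i/\gamma$ is needed.
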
 

\begin{proof}
	Rewrite the phase deviation equations as
	\begin{subequations}
		\begin{align}
			\dfrac{\text{d} \psi_i}{\text{d}\tau} & = - (\gamma_i \cos\psi_{D_i} + \gamma \cos \psi_D) \sin \psi_i  - \gamma \left( \sin(\psi_i-\psi_j) + \sin(\psi_i - \psi_k) \right)  \\[2ex]
			\dfrac{\text{d} \psi_j}{\text{d}\tau} & = - (\gamma_j \cos\psi_{D_j} + \gamma \cos \psi_D) \sin \psi_j  - \gamma \left( \sin(\psi_j-\psi_i) + \sin(\psi_j - \psi_k) \right)  \\[2ex]
			\dfrac{\text{d} \psi_k}{\text{d}\tau} & = - \gamma \cos \psi_D \sin \psi_k - \gamma \left( \sin(\psi_k-\psi_j) + \sin(\psi_k - \psi_j) \right) 
		\end{align}
	\end{subequations}
	Consider the function $V:[0,2\pi[\times [0,2\pi[\times[0,2\pi[ \mapsto \R$
	\begin{align}
		\nonumber V(\psi_i,\psi_j,\psi_k)&  = - (\gamma_i \cos \psi_{D_i} + \gamma \cos \psi_D) (\cos \psi_i +N_i) \\[2ex]
		\nonumber & -  (\gamma_j \cos \psi_{D_j} + \gamma \cos \psi_D) (\cos \psi_j +N_j) -  \gamma(\cos \psi_k +N_k) \\[2ex]
		& -  \gamma (\cos(\psi_i - \psi_j) + \cos(\psi_i - \psi_k) + \cos(\psi_j - \psi_k) + N_{ij} + N_{ik} + N_{jk}) \label{sec2-eq13}
	\end{align}
	where
	\begin{equation}
		N_{\alpha} = \left\{\begin{array}{lcr}
			+1 & \textrm{if} & \overline \psi_{\alpha}=\pi \\[2ex]
			-1 & \textrm{if} & \overline \psi_{\alpha} = 0, 
		\end{array}\right. \qquad
		N_{\alpha \beta} = \left\{\begin{array}{lcr}
			+1 & \textrm{if} & \overline \psi_{\alpha} - \overline \psi_{\beta} = \pm \pi \\[2ex]
			-1 & \textrm{if} & \overline \psi_{\alpha} - \overline \psi_{\beta} = 0, 
		\end{array}\right. \label{sec2-eq14}
	\end{equation}
	By definition, $V(\overline \psi_i, \overline \psi_j , \overline \psi_k) = 0$, and it is easy to verify that $V(\psi_i,\psi_j,\psi_k)$ is positive definite for all $(\psi_i,\psi_j,\psi_k) \ne (\overline \psi_i,\overline \psi_j, \overline\psi_k)$, for each equilibrium point in (a), (b) and (c). Finally, computing the partial derivatives of \eqref{sec2-eq13} we have
	\begin{equation}
		\dfrac{\text{d} V}{\text{d}\tau} = \dfrac{\partial V}{\partial \psi_i} \dfrac{\text{d} \psi_i}{\text{d} \tau} + \dfrac{\partial V}{\partial \psi_j} \dfrac{\text{d} \psi_j}{\text{d} \tau} + \dfrac{\partial V}{\partial \psi_k} \dfrac{\text{d} \psi_k}{\text{d} \tau} = -  \left( \dfrac{\text{d} \psi_i}{\text{d} \tau} \right)^2 - \left( \dfrac{\text{d} \psi_j}{\text{d} \tau} \right)^2  - \left( \dfrac{\text{d} \psi_k}{\text{d} \tau} \right)^2 <0  
	\end{equation} 	
	for all $(\psi_i,\psi_j,\psi_k) \ne (\overline \psi_i,\overline \psi_j, \overline\psi_k)$. Thus \eqref{sec2-eq13} is a strict Liapunov function and the equilibria are asymptotically stable.	
	$\qed$
\end{proof}

\begin{theorem}[OR gate]\label{or gate}
	Consider the phase deviation equation \eqref{sec2-eq11}, with  $\psi_{D} = 0$ and $\gamma_i = \gamma_j > 2 \gamma>0$. Then:
	\begin{enumerate}[label=(\alph*)]
		\item \quad If $\psi_{D_i} = \psi_{D_j} = 0$, then $(\overline \psi_i , \overline \psi_j , \overline \psi_k) = (0,0,0)$ is an asymptotically stable equilibrium point.
		
		\item \quad If $\psi_{D_i} = 0$, $\psi_{D_j} = \pi$ (respectively $\psi_{D_i} = \pi$,  $\psi_{D_j} = 0$), then $(\overline \psi_i, \overline \psi_j, \overline \psi_k) = (0,\pi,\pi)$ (respectively $(\overline \psi_i,\overline \psi_j ,\overline \psi_k) = (\pi,0,\pi)$) is an asymptotically stable equilibrium point.
		
		\item \quad If $\psi_{D_i} = \psi_{D_j} = \pi$ then $(\overline \psi_i, \overline \psi_j, \overline \psi_k) = (\pi,\pi,\pi)$ is an asymptotically stable equilibrium point.
	\end{enumerate}
\end{theorem}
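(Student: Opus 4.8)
The plan is to follow, almost line for line, the argument used for the AND gate (Theorem~\ref{and gate}): for each of the three equilibria listed in (a)--(c) I would produce a strict Liapunov function in a neighborhood of the point, and the only thing that really changes is that the stronger hypothesis $\gamma_i=\gamma_j>2\gamma$ (rather than $>\gamma$) is what is needed to make the candidate function positive definite near the equilibrium. No new function has to be guessed: the same ansatz \eqref{sec2-eq13}, with the sign constants $N_\alpha$, $N_{\alpha\beta}$ of \eqref{sec2-eq14} reset to match the equilibrium under consideration, is the candidate.

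First I would bring \eqref{sec2-eq11} to gradient form. Since in all of (a)--(c) the driving phases $\psi_{D}$, $\psi_{D_i}$, $\psi_{D_j}$ lie in $\{0,\pi\}$, one has $\sin(\psi_\alpha-\psi_{D_\alpha})=\cos\psi_{D_\alpha}\sin\psi_\alpha$ and $\sin(\psi_\alpha-\psi_{D})=\cos\psi_{D}\sin\psi_\alpha$, so the two driver terms of the $\alpha$-th equation merge into the single restoring term $-(\gamma_\alpha\cos\psi_{D_\alpha}+\gamma\cos\psi_{D})\sin\psi_\alpha$ (with $\gamma_k=0$), exactly as in the reformulation preceding \eqref{sec2-eq13}. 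The rewritten system then reads $\text{d}\psi_\alpha/\text{d}\tau=-\partial V/\partial\psi_\alpha$ with $V$ the function \eqref{sec2-eq13}, so along any solution
\[
\frac{\text{d} V}{\text{d}\tau}=\sum_{\alpha\in\{i,j,k\}}\frac{\partial V}{\partial\psi_\alpha}\,\frac{\text{d}\psi_\alpha}{\text{d}\tau}=-\sum_{\alpha\in\{i,j,k\}}\left(\frac{\text{d}\psi_\alpha}{\text{d}\tau}\right)^{2}\le 0 ,
\]
with equality only at the equilibria of \eqref{sec2-eq11}; hence $\text{d}V/\text{d}\tau<0$ on a punctured neighborhood of any isolated equilibrium.

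It then remains to check, in each of (a)--(c), that the quoted point $(\overline\psi_i,\overline\psi_j,\overline\psi_k)$ is a zero of $V$ --- immediate from the choice of the $N$'s --- and a strict local minimum. Expanding $V$ to second order about the equilibrium, its Hessian is a symmetric $3\times3$ matrix whose off-diagonal entries are $\pm\gamma$ and whose diagonal entries are affine functions of $\gamma_i,\gamma_j,\gamma$ (the exact values depending on whether each coupled oscillator sits in phase or in antiphase with the reference), so positive definiteness reduces to a short list of Sylvester inequalities. For the configuration in which all driven nodes and the reference are aligned these hold for every positive choice of the parameters; for the remaining configurations the critical condition is that the full $3\times3$ principal minor be positive, and this happens precisely when $\gamma_i=\gamma_j>2\gamma$ --- which is the content of the hypothesis. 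Once $V$ is seen to be positive definite and $\text{d}V/\text{d}\tau$ negative definite near each equilibrium, the standard Liapunov theorem yields the asserted asymptotic stability. The step I expect to be the real obstacle is exactly this positive-definiteness computation --- identifying which principal minor is binding and extracting the factor $2\gamma$; the rest is bookkeeping already carried out for the MAJORITY/AND gate.
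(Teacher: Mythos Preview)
Your plan coincides with the paper's for parts (b) and (c): there the paper simply says the proof is ``similar to the proof of theorem~\ref{and gate}'', i.e.\ the same Liapunov construction you describe. The one divergence is part (a): rather than verifying positive definiteness of $V$ via the Hessian, the paper linearizes directly --- computing the Jacobian at $(0,0,0)$, noting it is symmetric, and applying Descartes' rule of signs to the characteristic polynomial $\lambda^{3}+(2\gamma_i+3\gamma)\lambda^{2}+(\gamma_i^{2}+4\gamma_i\gamma)\lambda+(\gamma_i^{2}\gamma-4\gamma^{3})$, whose constant term furnishes the threshold $\gamma_i>2\gamma$. Since the system is the negative gradient flow of $V$, the Jacobian equals $-\mathrm{Hess}\,V$, so your Sylvester check on the Hessian and the paper's Descartes check on the Jacobian are two presentations of the same $3\times3$ determinant computation; your route is marginally more uniform across the three cases, the paper's marginally more direct in isolating where the factor $2\gamma$ enters.
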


\begin{proof}
	The proof of (b) and (c) is similar to the proof of theorem~\ref{and gate}. To prove (a), we exploit local stability analysis. 
	
	For $\psi_D= \pi$, and $\psi_{D_i} = \psi_{D_j} = 0$, the phase deviation equation becomes
	\begin{subequations}
		\begin{align}
			\dfrac{\text{d} \psi_i}{\text{d}\tau} & = - (\gamma_i - \gamma) \sin \psi_i  - \gamma \left( \sin(\psi_i-\psi_j) + \sin(\psi_i - \psi_k) \right)  \\[2ex]
			\dfrac{\text{d} \psi_j}{\text{d}\tau} & = - (\gamma_i - \gamma ) \sin \psi_j  - \gamma \left( \sin(\psi_j-\psi_i) + \sin(\psi_j - \psi_k) \right)  \\[2ex]
			\dfrac{\text{d} \psi_k}{\text{d}\tau} & =  \gamma \sin \psi_k - \gamma \left( \sin(\psi_k-\psi_j) + \sin(\psi_k - \psi_j) \right) 
		\end{align}
	\end{subequations}
	with Jacobian matrix (we assume $\gamma_i = \gamma_j$)
	\begin{equation}
		J(\psi_i,\psi_j,\psi_k)\bigg|_{(0,0,0)} = \left( \begin{array}{ccc}
			-\gamma_i-\gamma & \gamma & \gamma \\[2ex]
			\gamma & -\gamma_i - \gamma & \gamma \\[2ex]
			\gamma & \gamma & - \gamma
		\end{array}\right)
	\end{equation}
	
	As the Jacobian matrix is symmetric, its eigenvalues are real. The characteristic polynomial takes the form
	\begin{equation}
		-P(\lambda) = \lambda^3 + (2\gamma_i + 3 \gamma) \lambda^2 + (\gamma_i^2 + 4 \gamma_i \gamma) \lambda + (\gamma_i^2 \gamma - 4 \gamma^3)
	\end{equation}
	Thus, by Descartes rule of signs, the eigenvalues are all negative. 
	$\qed$
\end{proof}

\section{Conclusions}\label{conclusions}

In this work we have discussed emerging alternatives to standard computing architectures aiming at overcoming their most challenging limits. This excursus  highlighted unconventional approaches in von Neumann architecture configurations as well as in non-von Neumann ones, and it culminated in the discussion of oscillator-based approaches. 

We have introduced a formal apparatus able to describe systems of weakly coupled oscillators, at the core of unconventional computing architectures based on autonomous systems. We have shown how to derive the phase equations that we can then use to assess the working principle and the stability of these systems. Moreover, as an example, we have shown how to design asymptotically stable NOT, MAJORITY, AND and OR gates by means of weakly coupled oscillators.

This work provides an important mathematical framework useful for the design and analysis of this emerging computing approach.

\bibliography{SUSYref.bib,Logic_gates}
\bibliographystyle{ieeetr}

\end{document}